\definecolor{light-gray}{gray}{0.95}
\newcommand{\X}{\mathcal{X}}
\newcommand{\Var}{\mathit{Var}}
\newcommand{\var}{\mathit{var}}
\newcommand{\eq}{\,\dot =\,}
\newcommand{\eqbydef}{\triangleq}
\newcommand{\dfness}[3]{\lceil#1\rceil_{#2}^{#3}}
\newcommand{\aaa}[2][]{\todo[author=Andrei,color=green!10!green,#1]{#2}}
\newcommand{\dl}[2][]{\todo[author=Dorel,color=red!10!white,#1]{#2}}
\begin{document}
\title{Unification in Matching Logic}
\subtitle{Extended Version }
%
%
\author{Andrei Arusoaie
\and
Dorel Lucanu
}
\authorrunning{A. Arusoaie, D. Lucanu}
%
\institute{Alexandru Ioan Cuza University, Ia{\c s}i, Romania,
\email{\{arusoaie.andrei,dlucanu\}@info.uaic.ro}}

\maketitle              
\begin{abstract}
Matching Logic is a framework for specifying programming language semantics and reasoning about programs.
Its formulas are called \emph{patterns} and are built with variables, symbols, connectives and quantifiers.
A pattern is a combination of structural components (term patterns), which must be matched, and constraints (predicate patterns), which must be satisfied.
Dealing with more than one structural component in a pattern could be cumbersome because it involves multiple matching operations. A source for getting patterns with many structural components is the conjunction of patterns. 
Here, we propose a method that uses a syntactic unification algorithm to transform conjunctions of structural patterns into equivalent patterns having only one structural component and some additional constraints.
We prove the soundness of our approach, we discuss why the approach is not complete and we provide sound strategies to generate certificates for the equivalences, validated using Coq.
\keywords{Matching Logic  \and Syntactic term unification \and Semantic unification \and Certification.}
\end{abstract}

\section{Introduction}
\label{sec:introduction}
Matching Logic~\cite{rosu-2017-lmcs} (hereafter shorthanded  as ML) is a novel framework which is currently used for specifying programming languages semantics~\cite{ellison-rosu-2012-popl,DBLP:conf/pldi/HathhornER15,park-stefanescu-rosu-2015-pldi,DBLP:conf/popl/BogdanasR15} and for reasoning about programs~\cite{rosu-stefanescu-2012-fm,stefanescu-park-yuwen-li-rosu-2016-oopsla,stefanescu-ciobaca-mereuta-moore-serbanuta-rosu-2014-rta,DBLP:conf/lics/RosuSCM13,DBLP:conf/wrla/RusuA16,DBLP:conf/birthday/LucanuRAN15,arusoaie:hal-01627517}.
The logic is inspired from the domain of programming language semantics and it aims to use the operational semantics of a programming language as a basis for both \emph{execution} and \emph{verification} of programs.

\aaa{Oare e nevoie sa pastram acest paragraf pentru versiunea de conferinta?}
\dl{Decidem in functie de spatiu.}
On the program verification side, ML has some advantages over the existing program verification logics. The logic is \emph{parametric} in the operational semantics of a language. One can \emph{execute} the semantics against test suites and then use the \emph{same} semantics for verification. Therefore, one can detect issues in the semantics at an early stage and fix them right away, thus, providing additional trust in the semantics. The proof system of ML is proved sound and (relatively) complete for \emph{all} languages, unlike in the existing Floyd-Hoare logics, where the soundness of proof systems needs to be proved separately for each language. Moreover, ML eliminates the need to prove consistency relations between the operational semantics (used for execution) and the axiomatic semantics (used for verification) as it is often the case when using the traditional approaches.

The ML formulas, called \emph{patterns}, are built using variables, symbols, connectives and quantifiers. A pattern is evaluated to the set of values that \emph{matches} it. ML makes no distinction between function symbols and predicate symbols. Not having this distinction increases the expressivity of the language, where various notions (e.g., \emph{function}, \emph{equality}) can be specified using symbols that satisfy some axioms.

An example of such a ML formula is $\varphi_1$ below: it matches over the set of lists that start at address $p + 2$ and store the sequence $a$ which contains an even number on the third position:

$$\varphi_1 \eqbydef \mathtt{list}(p+2, a) \land  \exists k . (\mathit{select}~a~3) = 2 * k$$


\noindent
Basically, the novelty in ML w.r.t. first-order logics is that \emph{structural} components are formulas as well. 
In our example, $\varphi_1$ is a conjunction of a structural component $\mathtt{list}(p+2, a) $ -- that is, a list that starts at address $p+2$ which stores a sequence implemented as an array (encoded using the $\mathit{select}$-$\mathit{store}$ axioms ), -- and a constraint $\exists k . (\mathit{select}~a~3) = 2 * k$.
In ML, the structural components are called \emph{term patterns}, whereas the constraints are called \emph{predicates patterns}.


The conjunction of two ML patterns may produce a new pattern with more than one structural component, as shown here:
$$
\underbrace{\overbrace{\mathtt{list}(p+2, a)}^{\rm structure} \land 
\overbrace{\exists k . (\mathit{select}~a~3) = 2 * k}^{\rm constraint}}_{\varphi_1} \land 
\underbrace{\overbrace{\mathtt{list}(q, (\mathit{store}~b~3~y))}^{\rm structure} \land 
\overbrace{y>2}^{\rm constraint}}_{\varphi_2} 
$$
Finding a set of elements that matches the conjunction $\varphi_1 \land \varphi_2$ is not necessarily an easy task mainly because both structural components $(\mathtt{list}(p+2, a)) $ and $(\mathtt{list}(q, (\mathit{store}~b~3~y)))$ need to be matched simultaneously. 
In theory, this set is the intersection of the sets that match $\varphi_1$ and $\varphi_2$ independently. 

In practice, dealing with multiple structural components in one formula is cumbersome. 
Reasoning with such formulas is a burden for larger formulas.
Also, when mixing multiple structural components in one formula we lose the separation between structure and constraint. 
This separation is essential when implementing a ML prover, where the constraints can be handled separately using existing SMT solvers. In our examples above, the constraints of both $\varphi_1$ and $\varphi_2$ can be dealt with using existing SMT solvers like Z3~\cite{z3} or CVC4~\cite{cvc4} since they provide theories for handling arrays and quantifiers.
A more convenient approach would be to work with formulas that have only one structural component. 

In ML, the semantics of $\varphi_1 \land \varphi_2$ is the largest set of elements matching $\varphi_1$ and $\varphi_2$. Thus, the conjunction of two patterns can be seen as a semantic unification of the two patterns. So, it makes sense to relate syntactic unification to this notion of semantic unification~\cite{rosu-2017-lmcs}.
Let us consider the particular case when $\varphi_i \eqbydef t_i \land \phi_i$, where $t_i$ is a term pattern and $\phi_i$ is a predicate pattern, $i \in \{ 1, 2\}$. In this case:\footnote{For the sake of presentation, we assume here that all patterns have the same sort. Also, the last equality in the sequence $t_1 \land t_2 \land \phi_1 \land \phi_2 = t_1 \land (t_1 = t_2 ) \land \phi_1 \land \phi_2$ holds because of a lemma which is presented in the technical section of the paper.}
$$\varphi_1 \land \varphi_2 = t_1 \land \phi_1 \land t_2 \land \phi_2 = t_1 \land t_2 \land \phi_1 \land \phi_2 = t_1 \land (t_1 = t_2 ) \land \phi_1 \land \phi_2$$
The predicate patterns expressing the equality of two term patterns  $t_1 = t_2$ cannot be handled, e.g.,  by SMT solvers. Therefore, it would be more convenient to reduce it to a simpler equivalent predicate $\phi^{t_1 = t_2}$, which can be handled using external provers. In addition, it would be worth to produce a formal proof of the equivalence between $t_1 = t_2$ and $\phi^{t_1 = t_2}$.


At a first sight, unification of terms seems to be useful here. If $\sigma$ is the most general unifier of $t_1$ and $t_2$, seen as first-order terms, then $t_1\sigma = t_2\sigma$. Unifiers are substitutions, and substitutions can be transformed into ML formulas~\cite{DBLP:journals/cl/ArusoaieLR15}. 

In our list example, $\mathtt{list}(p+2, a)$ and $\mathtt{list}(q, (\mathit{store}~b~3~y))$ have $\sigma = \{ q \mapsto p+2, a \mapsto (\mathit{store}~b~3~y) \}$ as the most general unifier. Translating $\sigma$ to a formula results in $\phi^\sigma \eqbydef (q = p + 2) \land (a = (\mathit{store}~b~3~y))$. For this particular case, the term pattern equality $\mathtt{list}(p+2, a) = \mathtt{list}(q, (\mathit{store}~b~3~y))$ is equivalent to $\phi^\sigma$. Moreover, the semantic unifier $\varphi_1\land \varphi_2$ is also equivalent to $\mathtt{list}(p+2, a) \land \phi^\sigma \land (\exists k . (\mathit{select}~a~3) = 2 * k) \land y > 0$. This form is now convenient since it has only one structural component and a constraint manageable by an SMT solver.

\paragraph{Contributions.} 
We show that $\phi^{t_1 = t_2}$ can be obtained using the most general unifier $\sigma$ of $t_1$ and $t_2$, whenever it exists.
The proof of the equivalence between $t_1 = t_2$ and $\phi^\sigma$ is not trivial and, surprisingly, it depends on the algorithm used to compute the most general unifier. Our proof uses the syntactic unification algorithm proposed by Martelli and Montanari~\cite{martelli}.
Since the equivalence is proved only for the case when the most general unifier exists, we say that this algorithm is \emph{sound} for semantic unification in ML. 

Unfortunately, this algorithm is not \emph{complete} for semantic unification: if the terms $t_1$ and $t_2$ are not syntactically unifiable, then there are no guarantees that $t_1 \land t_2$ is a "contradiction" in ML. We present a detailed analysis of this aspect and we provide a counterexample.

Finally, a \emph{provableness} property of the Martelli-Montanari unification algorithm is shown: we provide a sound strategy to generate a proof certificate of the equivalence between $t_1 \land t_2$ and $t_1 \land \phi^\sigma$ with $\sigma$ the most general unifier of $t_1$ and $t_2$. This proof uses the rules of the ML proof system~\cite{rosu-2017-lmcs}, and the main idea is to transform the steps of the unification algorithm into sequences of proof steps. 
The proposed approach is validated by a Coq encoding, which mechanically checks the correctness of the applied strategy.


All these contributions explicitly establish the relationship between syntactic unification and semantic unification in ML, as summarised be the next table:

\begin{center}
\begin{tabular}{|c|c|c|}
\hline
Syntactic term unification & Semantic unification in ML & Where to find it\\
\hline
unification of $t_1$ and $t_2$ & $t_1 \land t_2$ & - \\
$\textrm{substitution }\sigma$  & $\phi^\sigma$ & Definition~\ref{def:sp}\\
$\sigma(t_1) = \sigma(t_2)$ & $\phi^\sigma \rightarrow t_1 = t_2$ & Lemma~\ref{lem:unif} \\
$\sigma = \mathit{mgu}(t_1, t_2)$ & $t_1 \land t_2 = t_1 \land \phi^\sigma = t_2 \land \phi^\sigma$ & Theorem~\ref{th:main}\\
syntactic unification algorithm & proof certificates & Section~\ref{sec:proofs}\\
\hline
\end{tabular}
\end{center}

\subsubsection{Paper organisation.}
In Section~\ref{sec:sunif} we recall the main notions and notations from the unification theory that we use in this paper. 
Section~\ref{sec:ml} includes a concise presentation of Matching Logic based on~\cite{rosu-2017-lmcs}. 
In Section~\ref{sec:unif} we show how to find the convenient representation of our semantic unifiers using the syntactic unification algorithm. We prove that the unification algorithm is sound for semantic unification and we discuss why this algorithm is not complete for semantic unification.
In Section~\ref{sec:proofs} we describe sound strategies for generating proofs that can be further used to generate proof certificates.

\section{Preliminaries}

\subsection{Syntactic Unification}
\label{sec:sunif}

We recall from~\cite{Baader99unificationtheory} the notions related to unification that we use in this paper.
We also recall the algorithm for finding the \emph{most general unifier} presented in~\cite{martelli}.

\medskip

Let $S$ be a set of sorts.
We consider a (countably) infinite S-indexed set of variables $\Var$ and a \emph{signature}, i.e., a (finite or countably infinite) S-indexed set of function symbols, $\Sigma$. 
By $T_\Sigma$ we denote the algebra of ground terms and by $T_\Sigma(\Var)$ the corresponding term algebra generated by $\Sigma$. 
To keep the presentation simple (as in~\cite{Baader99unificationtheory}) we do not explicitly show the sorts of the terms unless they cannot be inferred from context. This does not restrict in any way the generality and will be handled properly when transferring all these to Matching Logic.

We use the typical conventions and notations.  Letters $x, y, z$ denote variables and $c,f,g$ denote symbols. Terms are either variables or compound terms of the form $f(t_1, \ldots, t_n)$; $f \in \Sigma_{s_1\ldots s_n,s}$ means that $f$ has \emph{arity} $s_1\ldots s_n,s$, that is, for each $i = \overline{1,n}$, the subterm $t_i$ is of sort $s_i$ and the sort of $f(t_1, \ldots, t_n)$ is $s$.
If $n=0$ then $f$ is a constant and the term $f()$ is simply denoted by $f$.
By $\var(t)$ we denote the set of variables occurring in a term $t$. Substitutions are denoted by symbols $\sigma, \eta, \theta$ or directly as a set of bindings $\{ x_1 \mapsto t_1, \ldots , x_n \mapsto t_n\}$. We use $\iota$ to denote the identity substitution. The application of a substitution $\sigma$ to a term $t$ is denoted $t\sigma$\footnote{Although substitutions are defined only over a set of variables, it is well-known that they can be extended to terms. Also, if a substitution $\sigma$ is not defined for a variable, say $x$, then we consider $\sigma(x) = \iota(x) = x$.}. 
The \emph{composition} of substitutions $\sigma$ and $\eta$  is denoted as $\sigma\eta$. If  $\sigma = \{x\mapsto y, y\mapsto y, z \mapsto 4\}$  and $\eta = \{ y \mapsto 3 \}$ then $\sigma\eta= \{x\mapsto y, y\mapsto y, z \mapsto 4\} \eta  = \{x\mapsto (y\eta), y\mapsto (y\eta), z \mapsto (4\eta)\} = \{ x \mapsto 3,  y \mapsto 3,  z \mapsto 4\}$. 
Two substitutions $\sigma$ and $\eta$ are equal, written $\sigma = \eta$, if they are extensionally equal: $x\sigma=x\eta$ for every variable $x$. 
A substitution $\sigma$ is \emph{more general} than a substitution $\eta$, written as $\sigma \leq \eta$, if there is a substitution $\theta$ such that $\sigma\theta=\eta$.

\begin{example}
Let us consider a sort $s$ and a signature $\Sigma$ that includes the symbols $f, g, c$, where $f, g \in \Sigma_{ss,s}$ and $c \in \Sigma_{,s}$. Then  $g(c,c) \in T_\Sigma$ is a ground term, $f(g(x,c), y) \in T_\Sigma(\Var)$ is a term with variables and $\var(f(g(x,c), y)) = \{x, y\} \subseteq \Var$. A substitution $\eta = \{ x \mapsto c , y \mapsto g(x,z) \}$ applied to $f(g(x,a), y)$ produces $\big(f(g(x,c), y)\big) \eta = f(g(c,c), g(x,z))$. If $\sigma = \{ x \mapsto x' , y \mapsto g(x,z) \}$ then $\sigma \leq \eta$, because there is $\theta = \{ x' \mapsto c \} $ such that $\sigma\theta = \eta$.
\end{example}

\begin{definition}[Unifier, Most General Unifier]
\label{def:unifier}
A substitution $\sigma$ is a \emph{unifier} of two terms $t$ and $t'$ if $t\sigma=t'\sigma$. A unifier $\sigma$ is the \emph{most general unifier} (hereafter shorthanded as \emph{mgu}) if for every unifier $\sigma'$ of $t$ and $t'$ we have $\sigma \leq \sigma'$.
\end{definition}

\begin{example}
\label{ex:unif}
If $t \eqbydef f(g(x,c), y)$ and $t' \eqbydef f(z, y')$ are terms then $\sigma = \{ z \mapsto g(x,c), y \mapsto y' \}$ is a unifier of $t$ and $t'$: $t\sigma = f(g(x,c), y') =t'\sigma$.
\end{example}

Whenever there exists a unifier for two given terms we say that the terms are \emph{unifiable}.
It is not always the case that, given two terms, we can find unifiers for them. For example, recall $t$ from Example~\ref{ex:unif} and consider $t'' \eqbydef g(g(x,c), y)$. Then $t$ and $t''$ are not unifiable because it is impossible to find a substitution $\sigma$ such that $t\sigma = t''\sigma$. 
In the particular context of syntactic unification, for every two unifiable terms there exists a \emph{most general unifier}.

\begin{definition}[Unification problem, Solution, Solved form]
\label{def:unifproblem}
An unification problem $P$ is either a set of pairs of terms  $\{ t_1 \eq t'_1, \ldots, t_n \eq t'_n \}$ or a special symbol $\pmb\perp$. 
A substitution $\sigma$ is a \emph{solution} of a unification problem $P = \{ t_1 \eq t'_1, \ldots, t_n \eq t'_n \}$ if $\sigma$ is a unifier of $t_i$ and $t'_i$, for every $i=\overline{1,n}$. A unification problem $P$ is in \emph{solved form} if $P = \pmb\perp $ or $P = \{ x_1 \eq t'_1, \ldots, x_n \eq t'_n \}$ with $x_i \not\in \var(t_j)$ for all $i,j=\overline{1,n}$. 
\end{definition}

\noindent
Let $\mathit{unifiers(P)} = \{ \sigma \mid \sigma \textit{ is a solution of }P \}$ denote the set of solutions of $P$. If $P = \pmb\perp$ then $\mathit{unifiers(P)} = \emptyset$.
Each unification problem $P = \{ x_1 \eq t'_1, \ldots, x_n \eq t'_n \}$ in solved form defines a substitution $\sigma_P=\{ x_1 \mapsto t'_1, \ldots, x_n \mapsto t'_n \}$.

Among the well-known algorithms for finding the most general unifier we encounter the unification by recursive descent~\cite{Robinson:1965:MLB:321250.321253}, and a rule-based approach for finding the mgu~\cite{Herbrand1971,martelli}.
 The latter is presented in Figure~\ref{fig:unif} and it consists of a set of transformation rules of the form $P \Rightarrow P'$ applied over unification problems $P$ and $P'$.
\begin{figure}[htp]
\begin{center}
\begin{tabular}{lcl}
{\bf Delete}: & & $P \cup \{t \eq t\} \Rightarrow P $\\
{\bf Decomposition}: & & $P \cup \{f(t_1, \ldots, t_n) \eq f(t'_1, \ldots, t'_n)\} \Rightarrow P \cup \{t_1 \eq t'_1, \ldots, t_n \eq t'_n\}$\\
{\bf Symbol clash}: && $P \cup \{f(t_1, \ldots, t_n) \eq g(t'_1, \ldots, t'_n)\} \Rightarrow \pmb\perp$\\
{\bf Orient}: & & $P \cup \{f(t_1, \ldots, t_n) \eq x\} \Rightarrow P \cup \{x \eq f(t_1, \ldots, t_n)\}$\\
{\bf Occurs check}: & & $P \cup \{x \eq f(t_1, \ldots, t_n)\} \Rightarrow \pmb\perp$, if $x \in \var(f(t_1, \ldots, t_n))$\\
{\bf Elimination}: && $P \cup \{x \eq t\} \Rightarrow P\{x \mapsto t\} \cup \{ x \eq t \}$ if $x\not\in\var(t), x\in\var(P)$
\end{tabular}
\end{center}
\caption{A rule-based algorithm for syntactic unification}
\label{fig:unif}
\end{figure}%

\begin{remark}
\label{rem:unifalg}
We recall from~\cite{Baader99unificationtheory} the main properties of the unification algorithm in Figure~\ref{fig:unif}. 
\footnote{It is not the purpose of this paper to prove these results. The interested reader is referred to~\cite{Baader99unificationtheory} for complete proofs and details.}
If $P$ as a unification problem then:
\begin{enumerate}
\item\emph{Progress}: If $P$ is not in solved form, then there exists $P'$ such that $P \Rightarrow P'$. 
\item\emph{Solution preservation}: If $P \Rightarrow P'$ then $\mathit{unifiers(P)} = \mathit{unifiers(P')}$.
\item\emph{Termination}: There is no infinite sequence $P \Rightarrow P_1 \Rightarrow P_2 \Rightarrow \cdots$. 
\item\emph{Most general unifier}: If $\theta$ is a solution for P, then for any maximal sequence of transformations $P \Rightarrow^{!} \!P'$ either $P'$ is $\pmb\perp$ or $\sigma_{\!P'} \leq \theta$.
 If there is no solution for $P$ then $P'$ is $\pmb\perp$.
\end{enumerate}
\end{remark}

\noindent
The properties listed in Remark~\ref{rem:unifalg} essentially say that the algorithm in Figure~\ref{fig:unif} produces the most general unifier when it exists.
Note that this algorithm does not impose any strategy to apply the rules.

\begin{example}
Recall $t \eqbydef f(g(x,c), y)$ and $t' \eqbydef f(z, y')$ from Example~\ref{ex:unif}. Consider the unification problem $P = \{ t \eq t' \}$. Using the unification algorithm we obtain:
\begin{center}
\begin{tabular}{llcl}
$P= \{ t \eq t' \} \eqbydef$ &$\{f(g(x,c), y) \eq f(z, y')\}$ & $\Rightarrow$ & ({\bf Decomposition})\\
&$\{g(x,c) \eq z, y \eq  y'\}$ & $\Rightarrow$ & ({\bf Orient})\\
&$\{z \eq g(x,c), y \eq  y'\}$ & $\eqbydef P'$ \\
\end{tabular}
\end{center}
The obtained unification problem $P'$ is in solved form; the corresponding substitution $\sigma_{P'} = \{z \mapsto g(x,c), y \mapsto  y'\} $ is the most general unifier of $t$ and $t'$.
\end{example}

When it exists, the most general unifier is not unique. 
By composition with renaming substitutions we can generate an infinite set of \emph{mgus}. In general, we say that mgus are unique up to a composition with a renaming substitution.

\subsection{Matching Logic}
\label{sec:ml}

Matching Logic~\cite{rosu-2017-lmcs,DBLP:conf/rta/Rosu15} started as a logic over a particular case of constrained terms~\cite{rosu-stefanescu-2012-fm,DBLP:conf/lics/RosuSCM13,stefanescu-ciobaca-mereuta-moore-serbanuta-rosu-2014-rta,DBLP:conf/icse/RosuS11,DBLP:conf/wrla/RusuA16,arusoaie:hal-01627517,DBLP:conf/birthday/LucanuRAN15}, but now it is developed as a solid program logic framework. 
Here we recall from~\cite{rosu-2017-lmcs} the particular definitions and notions of ML that we use in this paper. 
This subsection is longer than an usual one for preliminaries. Since Matching Logic is a quite recent research contribution including new atypical concepts and results, we decided to present it with more details and examples. This makes the paper self-content.

\smallskip

ML formulas are defined over a \emph{many-sorted signature} $(S, \Sigma)$, where $\Sigma$ is a $S^*\times S$-indexed set of \emph{symbols}. The formulas in ML are \emph{patterns}:

\begin{definition}[ML Formula]
\label{def:mlsyntax}
A pattern $\Sigma$-pattern $\varphi_s$ of
sort $s$ is defined by:\\[1ex]
\centerline{
$\varphi_s::=x_s\mid f(\varphi_{s_1},\ldots,\varphi_{s_n})\mid \neg \varphi_s\mid \varphi_s\land \varphi_s\mid \exists x.\varphi_s$
}\\[1ex]
where $x_s$ ranges over the variables of sort $s$ ($x_s\in \X_s \subseteq \Var_s$), $f$ ranges over $\Sigma_{s_1\ldots s_n,s}$, and $x$ ranges over the set of variables (of any sort). 
\end{definition}

\noindent
The derived patterns are defined as expected: $\top\!\!_s\eqbydef \exists x.x$ ($x$ of sort $s$), $\bot_s\eqbydef\neg\top\!\!_s$\footnote{Note that $\bot$ is different from the (bold) symbol $\pmb\perp$ used in Section~\ref{sec:unif}.}, $\varphi_1\lor\varphi_2\eqbydef\neg(\neg\varphi_1\land\lnot\varphi_2)$, $\varphi_1\rightarrow\varphi_2\eqbydef\neg\varphi_1\lor\varphi_2$, $\varphi_1\leftrightarrow\varphi_2\eqbydef(\varphi_1\rightarrow\varphi_2)\land(\varphi_2\rightarrow\varphi_1)$.

\begin{example}
\label{ex:syntax}
Let $\mathit{Nat}$ be a sort and $\Sigma$ a signature which includes symbols $o \in \Sigma_{\mathit{Nat}}$ and $\mathit{succ} \in \Sigma_{\mathit{Nat}, \mathit{Nat}}$. Then, $o$, $\mathit{succ}(o)$, $\mathit{succ}(x)$, $o \land \mathit{succ}(o)$, $\lnot(o \land \mathit{succ}(o))$, $o \lor \exists x. \mathit{succ}(x)$ are all ML patterns. 
\end{example}

When sorts are not relevant or can be inferred from the context we drop the sort subscript ($\varphi_s$ becomes $\varphi$).

\begin{definition}[ML model]
\label{def:mlmodel}
A ML \emph{model} $\Sigma$-model $M$ consist of: 
\begin{itemize}
\item S-sorted sets $\{ M_s \}$ for each $s\in S$, where $M_s$ is the \emph{carrier of sort $s$} of M; 
\item a function $f_M: M_{s_1}\times\cdots\times M_{s_n}\to{\cal P}(M_s)$ (note the use of the powerset ${\cal P}(M_s)$ as the co-domain) for each  symbol $f \in \Sigma_{s_1\ldots s_n,s}$.
\end{itemize}
\end{definition}

\begin{example}
\label{ex:model}
Recall the signature $\Sigma$ from Example~\ref{ex:syntax}. A possible $\Sigma$-model $M$ includes a set $M_\mathit{Nat} = \mathbb{N}$, a constant function $o_M$ which evaluates to the singleton set $\{ 0 \}$, and a function $\mathit{succ}_M : \mathbb{N} \to {\cal P}(\mathbb{N})$ which returns a singleton set containing the successor of the given natural number. Here, the interpretation functions have only singleton sets as results. This is not always the case.
Let us enrich $\Sigma$ with a new symbol ${\leq} \in \Sigma_{\mathit{Nat}\mathit{Nat}, \mathit{Nat}}$.
We can choose the following interpretation $\leq_M$ function for the $\leq$ symbol: ${\leq_M} : \mathbb{N} \times \mathbb{N} \rightarrow {\cal P}(\mathbb{N})$, such that $\leq_M(x,y) = \mathbb{N}$ if $x$ is less or equal than $y$, and $\leq_M(x,y) = \emptyset$ otherwise.
\end{example}

The meaning of patterns is given by using \emph{valuations} $\rho$ as in first-order logic, but the result of the interpretation is a \emph{set} of elements that the pattern ``matches", similar to the worlds in modal logic.

\begin{definition}[M-valuations]
\label{def:valuation}
If $\rho : \X\to M$ is a variable valuation and $\varphi$ a pattern, then the extension of $\rho$ to patterns   $\overline{\rho}(\varphi)$ is inductively defined as follows: 
\begin{enumerate}
\item $\overline{\rho}(x)=\{\rho(x)\}$;
\item $\overline{\rho}(f(\varphi_1,\ldots,\varphi_n))=\bigcup\{f_M(v_1,\ldots,v_n)\mid v_i\in\overline{\rho}(\varphi_i), i=1,\ldots,n\}$;
\item $\overline{\rho}(\neg\varphi)=M_s\setminus\overline{\rho}(\varphi)$, where the sort of $\varphi$ is $s$;
\item $\overline{\rho}(\varphi_1\land\varphi_2)=\overline{\rho}(\varphi_1)\cap\overline{\rho}(\varphi_2)$, where $\varphi_1$ and $\varphi_2$ have the same sort;
\item $\overline{\rho}(\exists x.\varphi)=\bigcup_{v\in M_s}\overline{\rho}[v/x](\varphi)$, where $x\in \X_s$ and ${\overline{\rho}[v/x]}$ is the valuation $\rho'$ s.t. $\rho'(y)=\rho(y)$ for all $y\not=x$, and $\rho'(x)=v$. 
\end{enumerate}
\end{definition}

\noindent
When a functional symbol is a constant $c$ (case $2$ in Def.~\ref{def:valuation}) we let $\overline{\rho}(c) \!=\! c_M$. 
Additional constructs can be handled similarly (e.g. $\overline{\rho}(\varphi_1 \lor \varphi_2) = \overline{\rho}(\varphi_1) \cup \overline{\rho}(\varphi_2)$).

\begin{example}
\label{ex:valuation}
Recall the signature $\Sigma$ from Example~\ref{ex:syntax} and the model $M$ from Example~\ref{ex:model}. Also, consider a valuation $\rho: \X \to M$ such that $\rho(x) = 0$. The pattern $\mathit{succ}(x)$ matches over $\{ 1 \}$ because $\overline{\rho}(\mathit{succ}(x)) = \mathit{succ}_M(\rho(x)) = \mathit{succ}_M(0) = \{ 1 \}$.

An interesting pattern is $o \lor \exists x . \mathit{succ}(x)$ since it matches over the entire set $\mathbb{N}$. Indeed, if we consider any valuation $\varrho: \X \to M_\mathit{Nat}$, then 
$\overline{\varrho}(o \lor \exists x . \mathit{succ}(x)) = \{ 0 \} \cup \bigcup_{n \in \mathbb{N}}\overline{\varrho[n/x]}(\mathit{succ}(x)) =  \{ 0 \} \cup \bigcup_{n \in \mathbb{N}}\mathit{succ}_M(n) = M_\mathit{Nat} = \mathbb{N}$.
\end{example}

A particular type of patterns are M-\emph{predicates}. These are meant to capture the usual meaning of predicates, i.e., patterns that can be either true or false. 
\begin{definition}[M-predicates]
\label{def:mpred}
The pattern $\varphi_s$ is an M-\emph{predicate} iff for any valuation $\rho: \X \to M$, $\overline{\rho}(\varphi_s)$ is either $M_s$ or $\emptyset$. Also, $\varphi_s$ is called a \emph{predicate} iff it is a $M$-predicate in all models M. 
\end{definition}

\begin{example}
\label{ex:predicate}
The pattern $o \lor \exists x . \mathit{succ}(x)$ (from Example~\ref{ex:valuation}) is an $M$-predicate because for all $\varrho: \X \to M$ we have  $\overline{\varrho}(o \lor \exists x . \mathit{succ}(x)) = M_\mathit{Nat} = \mathbb{N}$.

The pattern $o \land \mathit{succ}(o)$ is also an $M$-predicate because $\overline{\varrho}(o \land \mathit{succ}(o)) = \overline{\varrho}(o) \cap \overline{\varrho}(\mathit{succ}(o)) = \{ 0 \} \cap \{ \mathit{succ}_M(0) \} = \emptyset$.
\end{example}

\begin{definition}[Satisfaction relation, validity]
\label{def:sat}
 A model $M$ \emph{satisfies} $\varphi$, written $M\models\varphi_s$, if $M_s=\overline{\rho}(\varphi_s)$ for each variable valuation $\rho$.
A pattern $\varphi$ is \emph{valid} (written $\models \varphi$) iff $M \models \varphi$ for all models M.
\end{definition}

\begin{example}
\label{ex:predicate}
Recall the model $M$ from Example~\ref{ex:model}. $M \models o \lor \exists x . \mathit{succ}(x)$ since, for all $\rho: \X \to M$ we have  $\overline{\rho}(o \lor \exists x . \mathit{succ}(x)) = M_\mathit{Nat}$.
\end{example}


\begin{proposition}[Proposition 2.6 in~\cite{rosu-2017-lmcs}]
\label{prop:derived}
Let $\varphi_1$ and $\varphi_2$ be two ML formulas and M a ML model. Then:
\begin{itemize}
\item $M \models \varphi_1 \rightarrow \varphi_2$ iff $\overline{\rho}(\varphi_1) \subseteq \overline{\rho}(\varphi_2)$ for all $\rho:\Var \to M$.
\item $M \models \varphi_1 \leftrightarrow \varphi_2$ iff $\overline{\rho}(\varphi_1) = \overline{\rho}(\varphi_2)$ for all $\rho:\Var \to M$.
\end{itemize}
\end{proposition}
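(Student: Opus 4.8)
The statement is a direct unfolding of the semantics, so the plan is to reduce both equivalences to elementary set algebra in the carrier $M_s$, where $s$ is the common sort of $\varphi_1$ and $\varphi_2$ (this sort must be shared for $\varphi_1 \land \varphi_2$ and $\varphi_1 \lor \varphi_2$ to be well formed, cf.\ Definition~\ref{def:mlsyntax}). Before starting I would record the auxiliary fact that $\overline{\rho}(\varphi) \subseteq M_s$ for every pattern $\varphi$ of sort $s$ and every valuation $\rho$; this follows by a routine structural induction on $\varphi$ using Definition~\ref{def:valuation} (for compound terms one uses that $f_M$ takes values in ${\cal P}(M_s)$, and the other clauses preserve the property trivially). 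This fact is what makes complementation well behaved in the arguments below.

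For the first item, unfold $M \models \varphi_1 \rightarrow \varphi_2$ via Definition~\ref{def:sat}: it holds iff $\overline{\rho}(\varphi_1 \rightarrow \varphi_2) = M_s$ for all $\rho$. Since $\varphi_1 \rightarrow \varphi_2 \eqbydef \neg\varphi_1 \lor \varphi_2$, clause~3 of Definition~\ref{def:valuation} together with the derived clause $\overline{\rho}(\psi_1 \lor \psi_2) = \overline{\rho}(\psi_1) \cup \overline{\rho}(\psi_2)$ gives $\overline{\rho}(\varphi_1 \rightarrow \varphi_2) = (M_s \setminus \overline{\rho}(\varphi_1)) \cup \overline{\rho}(\varphi_2)$. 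The proof then reduces to the set identity: for $A, B \subseteq M_s$, $(M_s \setminus A) \cup B = M_s$ iff $A \subseteq B$ --- indeed the complement in $M_s$ of the left-hand side is $A \cap (M_s \setminus B)$, which is empty exactly when $A \subseteq B$. Instantiating $A = \overline{\rho}(\varphi_1)$, $B = \overline{\rho}(\varphi_2)$ and quantifying over $\rho$ closes this case.

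For the second item, I would use $\varphi_1 \leftrightarrow \varphi_2 \eqbydef (\varphi_1 \rightarrow \varphi_2) \land (\varphi_2 \rightarrow \varphi_1)$ and clause~4 of Definition~\ref{def:valuation} to get $\overline{\rho}(\varphi_1 \leftrightarrow \varphi_2) = \overline{\rho}(\varphi_1 \rightarrow \varphi_2) \cap \overline{\rho}(\varphi_2 \rightarrow \varphi_1)$; since both factors are subsets of $M_s$, this intersection equals $M_s$ iff each factor does, so $M \models \varphi_1 \leftrightarrow \varphi_2$ iff $M \models \varphi_1 \rightarrow \varphi_2$ and $M \models \varphi_2 \rightarrow \varphi_1$. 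Applying the first item to each conjunct rewrites this as: for all $\rho$, $\overline{\rho}(\varphi_1) \subseteq \overline{\rho}(\varphi_2)$ and $\overline{\rho}(\varphi_2) \subseteq \overline{\rho}(\varphi_1)$, i.e.\ $\overline{\rho}(\varphi_1) = \overline{\rho}(\varphi_2)$ for all $\rho$. There is no genuine obstacle here: the whole argument is an unfolding of Definitions~\ref{def:mlsyntax}, \ref{def:valuation} and~\ref{def:sat} plus two one-line Boolean identities in ${\cal P}(M_s)$; the only things to watch are keeping the sorts of $\varphi_1$ and $\varphi_2$ aligned, and using the auxiliary ``interpretations stay inside $M_s$'' fact so that the complement computations are valid.
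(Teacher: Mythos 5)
Your proof is correct, and it is the standard unfolding argument: the paper itself states this as a cited result (Proposition~2.6 of the Matching Logic reference) without reproving it, so there is no in-paper proof to diverge from. Your reduction to the Boolean identities in ${\cal P}(M_s)$, together with the auxiliary observation that $\overline{\rho}(\varphi)\subseteq M_s$, is exactly what is needed and contains no gaps.
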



\begin{definition}[ML specifications]
A \emph{matching logic specification} is a triple $(S, \Sigma, F)$, where $F$ contains $\Sigma$-patterns. The $\Sigma$-patterns in $F$  are \emph{axiom patterns}. We say that $\varphi$ is a \emph{semantical consequence} of $F$, written $F \models \varphi$, iff  $M \models F$ implies $M \models \varphi$, for each $\Sigma$-model $M$.
\end{definition}

An important ingredient of ML is the  \emph{definedness} symbol $\dfness{\_}{s_1}{s_2} \in \Sigma_{s_1, s_2}$, with the following intuitive meaning: if $\varphi$ is matched by \emph{some} values of sort $s_1$ then $\dfness{\varphi}{s_1}{s_2}$ is $\top_{\!\!s_2}$, otherwise it is $\bot_{s_2}$. This interpretation is enforced by including the axiom pattern $\dfness{x}{s_1}{s_2}$ in the set of axioms $F$. This symbol and its associated pattern are used to define:
\begin{itemize}
\item \emph{conjunction of patterns with different sorts}: for instance, if the symbol ${\leq_b} \in \Sigma_{\mathit{Nat}\,\mathit{Nat},\mathit{Bool}}$, then the pattern $x \land o \leq_b x$ is not syntactically correct, because $x$ has sort $\mathit{Nat}$ whereas $ o \leq_b x$ has sort  $\mathit{Bool}$. Using definedness we can now write a syntactically correct formula $x \land \dfness{o \leq_b x}{\mathit{Bool}}{\mathit{Nat}}$;

\item \emph{membership pattern}: $x \in^{s_2}_{s_1} \varphi \eqbydef \dfness{x \land \varphi}{s_1}{s_2}$ with $x\in\Var_{s_1}$, where $x$ is another pattern that evaluates to a single value;

\item \emph{equality pattern}: $\varphi =^{s_2}_{s_1} \varphi' \eqbydef \lnot\dfness{\lnot(\varphi \leftrightarrow \varphi')}{s_1}{s_2}$.
\end{itemize}

In ML there is no distinction between function and predicate symbols. However, there is a way to specify that certain symbols are interpreted as functions. These symbols are called \emph{functional} symbols. 

\begin{definition}[Functional patterns]
\label{def:func}
A pattern $\varphi$ is \emph{functional in a model} M iff $\mid \overline{\rho}(\varphi) \mid = 1$ for any valuation $\rho : \Var \to M$. The pattern $\varphi$ is \emph{functional in } F iff it is functional in all models M such that $M \models F$.
\end{definition}

\begin{remark}
\label{rem:oneelem}
In~\cite{rosu-2017-lmcs} (more precisely, Proposition 5.17 in~\cite{rosu-2017-lmcs}) it is shown that functional patterns are interpreted as total functions in models, and their interpretation contains precisely one element. Moreover, given a ML specification $(S, \Sigma, F)$, a pattern $\varphi$ is functional in all $(S, \Sigma, F)$-models iff $F \models \exists y. (\varphi = y)$.
\end{remark}

\begin{example}
\label{ex:fp}
Recall the $o$ and $\mathit{succ}(x)$ patterns from Example~\ref{ex:syntax}. Both patterns are functional in $M_\mathit{Nat}$ (from Example~\ref{ex:model}) since they are interpreted as functions (i.e., $o_M$ and $\mathit{succ}_M$) which return a singleton set. If we want to have functional interpretation for $o$ and $\it succ$ in all models, then we have to add to $F$ the axioms $\exists y. (c = y)$ and $\exists y. (\mathit{succ}(x) = y)$.
\end{example}

The following technical result was proved in~\cite{rosu-2017-lmcs} and establishes the link between equivalence and equality of functional patterns:

\begin{proposition}[Proposition 5.9 in~\cite{rosu-2017-lmcs}]
\label{prop:eq}
If $\varphi$, $\varphi'$ are patterns of sort $s_1$~then:
\begin{itemize}
\item $\overline{\rho}(\varphi =^{s_2}_{s_1} \varphi') = \emptyset$ iff $\overline{\rho}(\varphi) \neq \overline{\rho}(\varphi')$, for any $\rho:\Var\to M$.
\item $\overline{\rho}(\varphi =^{s_2}_{s_1} \varphi') = M_{s_2}$ iff $\overline{\rho}(\varphi) = \overline{\rho}(\varphi')$, for any $\rho:\Var\to M$.
\item $M \models \varphi =^{s_2}_{s_1} \varphi'$ iff $M \models \varphi \leftrightarrow \varphi'$, for any model M.
\item $\models \varphi =^{s_2}_{s_1} \varphi'$ iff $\models \varphi \leftrightarrow \varphi'$.
\end{itemize}
\end{proposition}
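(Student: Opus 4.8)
The plan is to unfold the derived notation $\varphi =^{s_2}_{s_1} \varphi' \eqbydef \lnot\dfness{\lnot(\varphi \leftrightarrow \varphi')}{s_1}{s_2}$ and to reduce each of the four claims to an elementary computation on $\overline{\rho}$, relying on the semantics that the axiom $\dfness{x}{s_1}{s_2}$ forces on the definedness symbol.

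First I would pin down the behaviour of $\dfness{\_}{s_1}{s_2}$. In any model $M$ that satisfies the axiom $\dfness{x}{s_1}{s_2}$ we have $\overline{\rho}(\dfness{x}{s_1}{s_2}) = M_{s_2}$ for every valuation $\rho$, and since $\rho(x)$ ranges over all of $M_{s_1}$ this forces the interpretation of $\dfness{\_}{s_1}{s_2}$ in $M$ to send every $v \in M_{s_1}$ to $M_{s_2}$. By Definition~\ref{def:valuation}(2) it then follows that, for any pattern $\psi$ of sort $s_1$ and any $\rho$, $\overline{\rho}(\dfness{\psi}{s_1}{s_2})$ equals $M_{s_2}$ when $\overline{\rho}(\psi) \neq \emptyset$ and equals $\emptyset$ when $\overline{\rho}(\psi) = \emptyset$. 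In particular $\lnot\dfness{\psi}{s_1}{s_2}$ is always interpreted as $\emptyset$ or as $M_{s_2}$, so $\varphi =^{s_2}_{s_1} \varphi'$ is a predicate and there is never a third possibility for its value.

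Next I would compute $\overline{\rho}(\lnot(\varphi \leftrightarrow \varphi'))$ by unfolding $\leftrightarrow$, $\rightarrow$, $\lor$ and $\lnot$ through Definition~\ref{def:valuation}(3)--(4). A routine Boolean calculation shows that $\overline{\rho}(\varphi \leftrightarrow \varphi')$ is exactly the set of elements of $M_{s_1}$ that lie in $\overline{\rho}(\varphi)$ if and only if they lie in $\overline{\rho}(\varphi')$; hence $\overline{\rho}(\lnot(\varphi \leftrightarrow \varphi'))$ is the set of elements lying in exactly one of $\overline{\rho}(\varphi)$ and $\overline{\rho}(\varphi')$, and so $\overline{\rho}(\lnot(\varphi \leftrightarrow \varphi')) = \emptyset$ if and only if $\overline{\rho}(\varphi) = \overline{\rho}(\varphi')$. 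Combining this with the previous step yields the first two bullets at once: $\overline{\rho}(\varphi =^{s_2}_{s_1} \varphi') = M_{s_2} \setminus \overline{\rho}(\dfness{\lnot(\varphi \leftrightarrow \varphi')}{s_1}{s_2})$ is $M_{s_2}$ precisely when $\overline{\rho}(\lnot(\varphi \leftrightarrow \varphi')) = \emptyset$, i.e.\ when $\overline{\rho}(\varphi) = \overline{\rho}(\varphi')$, and is $\emptyset$ precisely otherwise. The third bullet then follows by quantifying over $\rho$ and invoking Proposition~\ref{prop:derived}: $M \models \varphi =^{s_2}_{s_1} \varphi'$ iff $\overline{\rho}(\varphi =^{s_2}_{s_1} \varphi') = M_{s_2}$ for all $\rho$ iff $\overline{\rho}(\varphi) = \overline{\rho}(\varphi')$ for all $\rho$ iff $M \models \varphi \leftrightarrow \varphi'$; the fourth bullet is the third one quantified over all (definedness-respecting) models $M$.

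I do not expect a deep obstacle: the whole argument is a chain of unfoldings. The point that needs the most care is the bookkeeping around the definedness symbol --- making it explicit that the $\{\emptyset, M_{s_2}\}$-valued behaviour used above holds exactly in the models ranged over, namely those that validate $\dfness{x}{s_1}{s_2}$ (equivalently, those whose axiom set $F$ contains the definedness axiom) --- together with keeping the two sorts straight: $s_1$ is the sort of $\varphi$, $\varphi'$ and of $\lnot(\varphi \leftrightarrow \varphi')$, while $s_2$ is the sort of the definedness wrapper and of the outer negation. The only genuinely computational step is the Boolean identity for $\overline{\rho}(\varphi \leftrightarrow \varphi')$, and that is routine.
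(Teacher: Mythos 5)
Your proof is correct: the paper itself gives no proof of this proposition (it is quoted verbatim from Proposition 5.9 of the cited Ro\c{s}u reference), and your argument — forcing the definedness symbol to be the constant-$M_{s_2}$ map via the axiom $\dfness{x}{s_1}{s_2}$, computing $\overline{\rho}(\lnot(\varphi\leftrightarrow\varphi'))$ as the symmetric difference, and then reading off the four bullets — is exactly the standard derivation used there. Your explicit caveat that the claim (in particular the fourth bullet) is relative to models validating the definedness axiom is the right piece of bookkeeping to flag.
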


\noindent
It is worth noting that the Proposition~\ref{prop:eq} holds only for functional patterns.
When functional patterns have the same sort, the proposition below holds:
\begin{proposition}[Proposition 5.24 in~\cite{rosu-2017-lmcs}]
\label{prop:fpatt}
If $\varphi$ and $\varphi'$ are two functional patterns of the same sort then $\models (\varphi \land \varphi') = \varphi \land (\varphi = \varphi')$.
\end{proposition}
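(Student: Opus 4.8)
The plan is to strip the outer (defined) equality, reduce the goal to a pointwise identity of pattern interpretations, and then settle that identity by a two‑case analysis driven by functionality of $\varphi$ and $\varphi'$. Write $s_1$ for the common sort of $\varphi$ and $\varphi'$; I read the inner $=$ as $=^{s_1}_{s_1}$ (so that $\varphi \land (\varphi = \varphi')$ is a well-sorted pattern of sort $s_1$) and the outer $=$ as $=^{s_2}_{s_1}$ for an arbitrary result sort $s_2$. Unfolding the abbreviation $\psi =^{s_2}_{s_1} \psi' \eqbydef \lnot\dfness{\lnot(\psi \leftrightarrow \psi')}{s_1}{s_2}$, using the definedness axiom and Definition~\ref{def:valuation}, one checks that for every model $M$ and valuation $\rho$ we have $\overline{\rho}(\psi =^{s_2}_{s_1} \psi') = M_{s_2}$ iff $\overline{\rho}(\psi) = \overline{\rho}(\psi')$, and $\overline{\rho}(\psi =^{s_2}_{s_1} \psi') = \emptyset$ otherwise (this is the pointwise form of the first two bullets of Proposition~\ref{prop:eq}, and it uses only the definedness axiom, not functionality of $\psi,\psi'$). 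Applying the first equivalence to the outermost equality together with Definition~\ref{def:sat}, the goal $\models (\varphi\land\varphi') = \varphi\land(\varphi=\varphi')$ becomes: for all $M$ and all $\rho : \Var \to M$,
\[
\overline{\rho}(\varphi) \cap \overline{\rho}(\varphi') \;=\; \overline{\rho}(\varphi) \cap \overline{\rho}(\varphi = \varphi'),
\]
where clause~4 of Definition~\ref{def:valuation} has been used to interpret the two conjunctions as intersections.

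Next I would fix $M$ and $\rho$ and use functionality (Definition~\ref{def:func}) to write $\overline{\rho}(\varphi) = \{a\}$ and $\overline{\rho}(\varphi') = \{b\}$ with $a,b \in M_{s_1}$, and split on whether $a = b$. If $a = b$, then $\overline{\rho}(\varphi) = \overline{\rho}(\varphi')$, so $\overline{\rho}(\varphi = \varphi') = M_{s_1}$ by the characterization above (equivalently, the second bullet of Proposition~\ref{prop:eq}), and both sides of the display equal $\{a\}$. If $a \neq b$, then $\overline{\rho}(\varphi) \neq \overline{\rho}(\varphi')$, so $\overline{\rho}(\varphi = \varphi') = \emptyset$, and both sides of the display equal $\emptyset$. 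In either case the two sets coincide, and since $M$ and $\rho$ were arbitrary, the reduction above yields the proposition.

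I do not expect a genuine obstacle here; once Proposition~\ref{prop:eq} is available the content is light, and what needs care is mostly bookkeeping. First, the sorts of the nested definedness symbols must be lined up: the inner equality is taken at result sort $s_1$ so that the conjunction with $\varphi$ typechecks, the outer one at an arbitrary $s_2$. Second, the characterizations of Proposition~\ref{prop:eq} must be invoked only for the \emph{functional} patterns $\varphi$ and $\varphi'$ (the inner equality); for the outer equality, whose arguments $\varphi\land\varphi'$ and $\varphi\land(\varphi=\varphi')$ need not be functional, I rely instead on the definitional unfolding of $=$, which yields the same pointwise characterization without any functionality hypothesis. Finally, "functional pattern" under the bare $\models$ should be read as "functional in every model" — equivalently, the statement is understood relative to a specification $F$ (in particular containing the definedness axiom) that forces $\varphi$ and $\varphi'$ to be functional, and the argument above goes through verbatim.
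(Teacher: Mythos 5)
Your semantic argument is correct: the reduction of the outer equality to set-equality of interpretations is legitimate (and you are right that it needs only the definedness axiom, not functionality of the two sides), and the two-case analysis on the singletons $\{a\}$, $\{b\}$ supplied by Definition~\ref{def:func} settles the pointwise identity $\overline{\rho}(\varphi)\cap\overline{\rho}(\varphi') = \overline{\rho}(\varphi)\cap\overline{\rho}(\varphi=\varphi')$ in both cases. However, this is a genuinely different route from the paper's. The paper does not re-prove the proposition model-theoretically at all — it imports the semantic fact as Proposition 5.24 of~\cite{rosu-2017-lmcs} — and the proof it does supply (in Table~\ref{table:derivedproofs}) is a Hilbert-style derivation inside the ML proof system, split into the two implications: for $\rightarrow$ it passes from $\varphi\land\varphi'$ through the definedness axiom to $\dfness{\varphi\land\varphi'}{}{}$, reads this as the membership $\varphi\in\varphi'$, and converts it to $\varphi=\varphi'$ via rule~\ref{membershipequality}; for $\leftarrow$ it extracts $\varphi=\varphi'$ and $\varphi$ propositionally and recovers $\varphi'$. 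What the paper's derivation buys is exactly what Section~\ref{sec:proofs} needs — a machine-checkable certificate using only rules \ref{propositional}, \ref{modusponens}, \ref{membershipequality} and the definedness axiom — whereas your argument buys a self-contained justification of validity that does not lean on the external citation. Two small points of bookkeeping in your favour: your explicit sort discipline (inner equality at result sort $s_1$, outer at arbitrary $s_2$) is needed for the statement to typecheck and is left implicit in the paper, and your observation that the outer equality's characterization does not require its arguments to be functional is what makes the reduction legitimate, since $\varphi\land\varphi'$ and $\varphi\land(\varphi=\varphi')$ are generally not functional patterns.
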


\begin{definition}[Term patterns]
\label{def:terms}
If $f \in \Sigma_{s_1, \ldots, s_n,s}$ is a symbol such that $F$ contains the pattern $ \exists y. (f(x_1, \ldots, x_n) = y)$ then $f$ is a \emph{functional symbol}. \emph{Term patterns} are formulas containing only functional symbols.
\end{definition}

\begin{example}
\label{ex:termpattern}
If $1, f, g$ are symbols in $\Sigma$ and $x$ and $z$ are variables in $\Var$, then 
$t\eqbydef f(x, g(1), g(z))$ is a term pattern if $\exists y.f(x_1, x_2) = y$, $\exists y.g(x_1) = y$, and $\exists y.1 = y$ are semantical consequences of the axioms $F$.
\end{example}

\subsubsection{Substitution.} Sometimes we need to use substitution over ML patterns directly. We use $\varphi[\varphi'/x]$ to denote the formula obtained by substituting $\varphi'$ for variable $x$ in $\varphi$ (we assume $\varphi'$ and $x$ have the same sort):
\begin{enumerate}
\item $x[\varphi'/x] = \varphi'$; ~~$y[\varphi'/x] = y$ when $x \neq y$.
\item $f(\varphi_{1}, \ldots, \varphi_{n})[\varphi'/x] =  f(\varphi_{1}[\varphi'/x], \ldots, \varphi_{n}[\varphi'/x])$
\item $(\lnot \varphi)[\varphi'/x] = \lnot (\varphi[\varphi'/x])$
\item $(\varphi_1 \land \varphi_2)[\varphi'/x] = \varphi_1[\varphi'/x] \land \varphi_2[\varphi'/x]$
\item $(\exists y . \varphi) [\varphi'/x] = \exists y . \varphi[\varphi'/x]$, if $y \not\in \var(\varphi')$; otherwise, a renaming is required.
\end{enumerate}

Our main result use the following technical lemma. For the particular case when the equivalence and the equality are the same it is a consequence of Proposition 5.10 from~\cite{rosu-2017-lmcs}. We include its proof here as an example of Matching Logic reasoning.
\begin{lemma}
\label{lem:substpattern}
If $\varphi$ is a pattern, $t$ is a term pattern, and $x$ is a variable such that $F\models x = t$, then $F\models \varphi[t/x] = \varphi$.
\end{lemma}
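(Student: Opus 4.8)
The plan is to transport the statement from the level of equality patterns down to interpretations, and then to establish the resulting fact by structural induction on $\varphi$. First I would unfold the equality pattern: for any model $M$ (in particular any $M\models F$), $M\models\psi=\psi'$ holds iff $\overline{\rho}(\psi)=\overline{\rho}(\psi')$ for every valuation $\rho$. This follows by computing $\overline{\rho}(\lnot\dfness{\lnot(\psi\leftrightarrow\psi')}{s_1}{s_2})$ from Definition~\ref{def:valuation} and the semantics of the definedness symbol, noting that $\overline{\rho}(\psi=\psi')$ is always $M_{s_2}$ or $\emptyset$, and that the step ``$\overline{\rho}(\psi\leftrightarrow\psi')=M_{s_1}$ iff $\overline{\rho}(\psi)=\overline{\rho}(\psi')$'' is the per-valuation version of Proposition~\ref{prop:derived}. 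Crucially this needs no functionality assumption on $\psi,\psi'$, which matters since neither $\varphi$ nor $\varphi[t/x]$ need be functional. Hence it suffices to show that for every $M\models F$ and every $\rho:\X\to M$ we have $\overline{\rho}(\varphi[t/x])=\overline{\rho}(\varphi)$.

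Next I would use the hypothesis $F\models x=t$: in any $M\models F$ and for any $\rho$ it gives $\overline{\rho}(t)=\overline{\rho}(x)=\{\rho(x)\}$, so $\overline{\rho}(t)$ is in particular a singleton (so the assumption that $t$ is a term pattern is not even needed for this). The core is then a substitution lemma, proved by structural induction on $\varphi$ and stated uniformly over all valuations: \emph{if $\overline{\rho}(t)=\{v\}$ then $\overline{\rho}(\varphi[t/x])=\overline{\rho}[v/x](\varphi)$}. The base cases are immediate from Definition~\ref{def:valuation}: for $\varphi=x$ we use $\overline{\rho}(t)=\{v\}=\overline{\rho}[v/x](x)$ (this is exactly where the singleton hypothesis is essential), and for $\varphi=y\neq x$ both sides equal $\{\rho(y)\}$. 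The cases $f(\varphi_1,\dots,\varphi_n)$, $\lnot\varphi$ and $\varphi_1\land\varphi_2$ go through mechanically by pushing the substitution inside and applying the induction hypothesis; the negation case also makes visible why the singleton hypothesis is needed, since otherwise $\overline{\rho}((\lnot x)[t/x])=M_s\setminus\overline{\rho}(t)$ would not coincide with $\overline{\rho}[v/x](\lnot x)$. For $\varphi=\exists y.\psi$ I would first $\alpha$-rename so that $y\neq x$ and $y\notin\var(t)$ (possible because $\var(t)$ is finite), so that $(\exists y.\psi)[t/x]=\exists y.\psi[t/x]$; then for each $w\in M_{s'}$ we have $\overline{\rho}[w/y](t)=\overline{\rho}(t)=\{v\}$ since $y\notin\var(t)$, the induction hypothesis applied at $\rho[w/y]$ yields $\overline{\rho}[w/y](\psi[t/x])=\overline{\rho}[v/x][w/y](\psi)$, and taking the union over $w$ gives the claim.

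Finally, combining the two ingredients with $v=\rho(x)$: since $\overline{\rho}[\rho(x)/x]=\overline{\rho}$, the substitution lemma yields $\overline{\rho}(\varphi[t/x])=\overline{\rho}[\rho(x)/x](\varphi)=\overline{\rho}(\varphi)$ for every $M\models F$ and every $\rho$, hence $M\models\varphi[t/x]=\varphi$ for all such $M$, i.e. $F\models\varphi[t/x]=\varphi$. I expect the main obstacle to be the $\exists$ case of the substitution lemma — getting the capture bookkeeping right ($\alpha$-renaming, commuting the updates $[v/x][w/y]$, and using $y\notin\var(t)$ to keep $\overline{\rho}(t)$ fixed under valuation changes) — whereas the propositional cases are routine.
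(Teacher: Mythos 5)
Your proposal is correct, and its overall shape matches the paper's: reduce $F\models\varphi[t/x]=\varphi$ to the semantic statement that $\overline{\rho}(\varphi[t/x])=\overline{\rho}(\varphi)$ for every $M\models F$ and every valuation $\rho$, then argue by structural induction on $\varphi$. Where you genuinely diverge is in the induction invariant. The paper inducts directly on $\overline{\rho}(\varphi[t/x])=\overline{\rho}(\varphi)$, threading the hypothesis $F\models x=t$ through every case; this works because that hypothesis holds at \emph{every} valuation, which is what rescues the base case and the $\exists$ case (where the valuation changes to $\overline{\rho}[v/y]$). You instead isolate a hypothesis-free substitution lemma --- if $\overline{\rho}(t)=\{v\}$ then $\overline{\rho}(\varphi[t/x])=\overline{\rho}[v/x](\varphi)$ --- and only at the end specialize $v=\rho(x)$ via $F\models x=t$ and $\overline{\rho}[\rho(x)/x]=\overline{\rho}$. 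This buys you three things: it pinpoints exactly where the singleton hypothesis is needed (the variable and negation cases), it forces the $\exists$ bookkeeping ($\alpha$-renaming so $y\neq x$ and $y\notin\var(t)$, commuting the updates $[v/x]$ and $[w/y]$) to be done explicitly where the paper's treatment is a one-liner, and it exposes that the ``term pattern'' assumption on $t$ is redundant given $F\models x=t$. The paper's route is terser and avoids the auxiliary statement. Your preliminary care about unfolding the equality pattern without assuming functionality of $\varphi$ and $\varphi[t/x]$ is also warranted, since Proposition~\ref{prop:eq} as invoked here must apply to arbitrary patterns of the same sort.
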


\begin{proof}
By induction on $\varphi$, we show: for all $M$ and $\rho:\X \to M$, $\overline{\rho}(\varphi[t/x]) = \overline{\rho}(\varphi)$:
\begin{itemize}
\item $\overline{\rho}(x[t/x]) = \rho(t) = \rho(x)$, which (by Proposition~\ref{prop:eq}) holds since $x = t$;
\item $\overline{\rho}(y[t/x]) = \rho(y)$ when $x \neq y$;
\item $\overline{\rho}(f(\varphi_{1}, \ldots, \varphi_{n})[t/x]) = \overline{\rho}(f(\varphi_{1}[t/x], \ldots, \varphi_{n}[t/x]))$ which, by Definition~\ref{def:mlsyntax} is $\bigcup_{}{\{f_M(v_1, \ldots, v_n) \mid v_i \in \overline{\rho}(\varphi_i[t/x]), i=\overline{1,n}\}}$. Here, we use the inductive hypothesis which says that $ \overline{\rho}(\varphi_{i}) =  \overline{\rho}(\varphi_{i}[t/x])$ for all $i=\overline{1,n}$, and we obtain   
$\bigcup_{}{\{f_M(v_1, \ldots, v_n) \mid v_i \in \overline{\rho}(\varphi_i), i=\overline{1,n}\}} = 
\overline{\rho}(f(\varphi_{1}, \ldots, \varphi_{n}))$;
\item $\overline{\rho}((\lnot \varphi')[t/x]) = \overline{\rho}(\lnot (\varphi'[t/x])) = M \setminus \overline{\rho}(\varphi'[t/x]) = M \setminus \overline{\rho}(\varphi') = \overline{\rho}(\lnot \varphi')$ using $ \overline{\rho}(\varphi'[t/x]) =  \overline{\rho}(\varphi')$ from the inductive hypothesis;
\item $\overline{\rho}((\varphi_1 \land \varphi_2)[t/x]) = \overline{\rho}(\varphi_1[t/x] \land \varphi_2[t/x]) =  \overline{\rho}(\varphi_1[t/x]) \cap \overline{\rho}(\varphi_2[t/x])  = \overline{\rho}(\varphi_1) \cap \overline{\rho}(\varphi_2) = \overline{\rho}(\varphi_1 \land \varphi_2)$ by the inductive hypothesis: $\overline{\rho}(\varphi_i[t/x]) = \overline{\rho}(\varphi_i)$, $i \in \{1,2\}$;
\item $\overline{\rho}((\exists y . \varphi')[t/x]) = \bigcup_{v \in M}\{\overline{\rho}[v/y](\varphi'[t/x])\} = \bigcup_{v \in M}\{\overline{\rho}[v/y](\varphi')\} = \overline{\rho}(\exists y . \varphi')$, with $y \not\in \var(t)$ and $\overline{\rho}[v/y](\varphi'[t/x]) = \overline{\rho}[v/y](\varphi')$  - the inductive hypothesis.
\end{itemize}
Since for all $M$ and $\rho:\X \to M$, $\overline{\rho}(\varphi[t/x]) \!=\! \overline{\rho}(\varphi)$ (by Prop.~\ref{prop:eq}) we have $\varphi[t/x] = \varphi$. \qed
\end{proof}

\subsubsection{The proof system of Matching Logic.}

\newcounter{lprop}
\newcommand{\proprule}[1]{\refstepcounter{lprop}\label{#1}$P$\arabic{lprop}.~}
\renewcommand{\thelprop}{$P$\arabic{lprop}}
\begin{figure}[htbp]
\begin{center}
\begin{tabular}{|l l|}
\hline

\proprule{l1} & $\vdash$ $\varphi \rightarrow (\varphi' \rightarrow \varphi)$\\
\proprule{l2} & $\vdash$ $(\varphi \rightarrow (\varphi' \rightarrow \varphi'')) \rightarrow ((\varphi \rightarrow \varphi') \rightarrow (\varphi \rightarrow\varphi''))$\\
\proprule{l3} & $\vdash$ $(\lnot \varphi' \rightarrow \lnot \varphi) \rightarrow (\varphi \rightarrow \varphi')$\\
\hline
\end{tabular}
\end{center}
\caption{Rules for propositional reasoning in ML.}
\label{fig:prop}
\end{figure}

Matching Logic provides a proof system that is sound and complete (Figure~\ref{fig:proofsystem}). 
The notation $\varphi[\varphi'/x]$ denotes the pattern obtained from $\varphi$ by replacing all free occurrences of $x$ with $\varphi'$.
Note that the propositional calculus reasoning is subsumed by rules \ref{propositional}-\ref{modusponens} of the proof system. 
According to~\cite{koredocs}, \ref{propositional} is in fact a set of rules that includes a version of the implicational propositional calculus (proposed by \L{}ukasievicz~\cite{lukasievicz}) shown in Fig.~\ref{fig:prop}.

\newcounter{original}
\newcommand{\originalrule}[1]{\refstepcounter{original}\label{#1}$R$\arabic{original}.~}
\renewcommand{\theoriginal}{$R$\arabic{original}}

\begin{figure}[htbp]
\begin{center}
\begin{tabular}{|l l|}
\hline
\originalrule{propositional} & $\vdash$ propositional tautologies\\
\originalrule{modusponens} & Modus ponens: $\vdash \varphi_1$ and $\vdash \varphi_1 \rightarrow \varphi_2$ imply $\vdash \varphi_2$\\
\originalrule{free} & $\vdash(\forall x . \varphi_1 \rightarrow \varphi_2) \rightarrow \varphi_1 \rightarrow (\forall x . \varphi_2)$, when $x$ does not occur free in $\varphi_1$\\
\originalrule{universalgeneralization} & Universal generalization: $\vdash \varphi$ implies $\vdash \forall x . \varphi$\\
\originalrule{functionalsubstitution} & Functional substitution: $\vdash (\forall x . \varphi) \land (\exists y . \varphi' = y) \rightarrow \varphi[\varphi'/x]$\\
$R$5.' & Functional variable: $\vdash \exists y . x = y$\\
\originalrule{equalityintroduction} & Equality introduction: $\vdash \varphi = \varphi$\\
\originalrule{equalityelimination} & Equality elimination: $\vdash \varphi_1 = \varphi_2 \land \varphi[\varphi_1/x] \rightarrow \varphi[\varphi_2/x]$\\
\originalrule{membershipequiv} & $\vdash \forall x . x \in \varphi$ iff $\vdash \varphi$\\
\originalrule{membershipequality} & $\vdash x \in y = (x = y)$ when $x, y \in \Var$\\
\originalrule{membershipnegation} &  $\vdash x \in \lnot \varphi = \lnot (x \in \varphi) $\\
\originalrule{membershipconjunction} & $\vdash x \in \varphi_1 \land \varphi_2 = (x \in \varphi_1) \land (x \in \varphi_2)$\\
\originalrule{membershipexists} & $\vdash (x \in \exists y.\varphi) = \exists y . (x \in \varphi)$, with $x$ and $y$ distinct\\
\originalrule{membership} & $\vdash x \!\in\! f(\varphi_1,.., \varphi_{i-1}, \varphi_i, \varphi_{i+1},.., \varphi_n) = \exists y . (y \!\in\! \varphi_i \land f(\varphi_1,.., \varphi_{i-1}, y, \varphi_{i+1},..,\varphi_n))$\\
\hline
\end{tabular}
\end{center}
\caption{Sound and complete proof system of Matching Logic~\cite{rosu-2017-lmcs}}
\label{fig:proofsystem}
\end{figure}

\subsubsection{Unification in Matching Logic.}
\label{sec:uniftoml}

In~\cite{rosu-2017-lmcs}, unification has a semantical definition. 
More precisely, it is defined in terms of conjunctions of patterns. 
In order to explain this better, let us consider two ML patterns: $\varphi$ and $\varphi'$. 
Both patterns can be matched by (possibly infinite) sets of elements, say $\overline{\rho}(\varphi)$ and $\overline{\rho}(\varphi')$, given some variable valuation $\rho$. 
In this context, finding a unifier is the same as finding a pattern $\varphi_u$  that matches over a set of elements included in both $\overline{\rho}(\varphi)$ and $\overline{\rho}(\varphi')$, that is,  $\overline{\rho}(\varphi_u) \subseteq \overline{\rho}(\varphi)\cap\overline{\rho}(\varphi')$, for any $\rho$.
The \emph{most general} pattern $\varphi_u$ that corresponds to the \emph{largest} set 
with this property (i.e.,  $\overline{\rho}(\varphi)\cap\overline{\rho}(\varphi')$), is (by Definition~\ref{def:valuation}) the pattern $\varphi\land \varphi'$.

\section{From Unification Theory to Matching Logic}
\label{sec:unif}
This section is concerned with finding, for two given term patterns $t_1$ and $t_2$, a pattern of the form $t \land \phi$ and having the following properties: 1) $t$ is a term pattern, 2) $\phi$ is a predicate pattern that captures the idea of the most general unifier of $t_1$ and $t_2$, and 3) $\models t_1\land t_2=t\land\phi$.
This particular form ($t \land \phi$) has some very practical advantages compared to $t_1 \land t_2$. 
First, there only one structural part of the formula held by $t$ which is separated from the constraints $\phi$. 
Second, as we show in this section, having a single structural component in a formula allows implementations to reuse existing work on  unification.
Finally, the separation of constraints $\phi$ enables the use of SMT solvers for reasoning.

The idea of transforming the pattern $t_1 \land t_2$ into an equivalent one $t \land \phi$ was suggessted in~\cite{rosu-2017-lmcs}, using an example. Here we propose a general solution that involves the unification algorithm shown in Figure~\ref{fig:unif}. 
Example~\ref{ex:mgu} illustrates how the rules of the unification algorithm are simulated by pattern transformations. Except the step (\ref{a:fpatt}) - which is a direct consequence of Proposition~\ref{prop:fpatt} applied to~(\ref{a:start}) - the rest of the equations correspond to the steps of the algorithm: {\bf Decomposition} for (\ref{a:dec1},\ref{a:dec2},\ref{a:dec3}), {\bf Orient} for (\ref{a:orient}), and {\bf Elimination} for (\ref{a:elim1},\ref{a:elim2}).

\begin{example}
\label{ex:mgu}
$t_1 \land t_2$ can be transformed into an equivalent formula $t \land \phi$:
\begin{align}
t_1 \land t_2 &= f(x, g(1), g(z)) \land f(g(y), g(y), g(g(x)) \label{a:start} \\
&= f(x, g(1), g(z)) \land (f(x, g(1), g(z)) = f(g(y), g(y), g(g(x))) \label{a:fpatt} \\
&= f(x, g(1), g(z)) \land (x = g(y)) \land (g(1) = g(y)) \land (g(z) = g(g(x))) \label{a:dec1}\\
&= f(x, g(1), g(z)) \land (x = g(y)) \land (1 = g(y)) \land (g(z) = g(g(x))) \label{a:dec2}\\
&= f(x, g(1), g(z)) \land (x = g(y)) \land (1 = y) \land (z = g(x)) \label{a:dec3}\\
&= f(x, g(1), g(z)) \land (x = g(y)) \land (y = 1) \land (z = g(x)) \label{a:orient}\\
&= f(x, g(1), g(z)) \land (x = g(1)) \land (y = 1) \land (z = g(x)) \label{a:elim1}\\
&= \underbrace{f(x, g(1), g(z))}_t \land \underbrace{(x = g(1)) \land (y = 1) \land (z = g(g(1)))}_\phi \label{a:elim2}
\end{align}
\end{example}

Mainly, the idea illustrated in Example~\ref{ex:mgu} is to use the syntax unification algorithm to determine $\phi$.
In the rest of this section we introduce several notions and prove some intermediate technical results that we use to formally prove the equality (in the ML sense) of $t \land \phi$ and $t_1 \land t_2$.

\subsection{Encoding unification in ML}
\label{sec:results}
Terms can be naturally expressed in ML  as term patterns provided the following set of axiom patterns which we consider implicitly included in $(S, \Sigma, F)$:
\begin{enumerate}
\item The definedness patterns, needed to define equality and membership;
\item Axioms ensuring that the structural patterns are built only with functional symbols (cf. Definition~\ref{def:terms});
\item Axioms ensuring that all functional symbols used in structural patterns are interpreted as injections:
\begin{equation}
\label{eq:inj}
f(x_1,\ldots,x_n) = f(y_1, \ldots, y_n) \rightarrow x_1 = y_1 \land \ldots \land x_n = y_n.
\end{equation}
\end{enumerate}

\dl[inline]{Partea care urmeaza nu va fi inclusa in submisia la conferinta.}
\aaa[inline]{Agreed.}
Let us explain here why this particular axiom ensures injectivity.
Indeed, in any model $M$ and for any $\rho: \X \to M$, $\overline{\rho}(f(x_1,\ldots,x_n) = f(y_1, \ldots, y_n) \rightarrow x_1 = y_1 \land \ldots \land x_n = y_n) = M$ (cf. Prop.~\ref{prop:derived}) implies that  $\overline{\rho}(f(x_1,\ldots,x_n) = f(y_1, \ldots, y_n)) \subseteq \overline{\rho}(x_1 = y_1 \land \ldots \land x_n = y_n))$. 
If $\overline{\rho}(f(x_1,\ldots,x_n) = f(y_1, \ldots, y_n)) = \emptyset$ (recall that equality is a predicate), then (cf. Prop.~\ref{prop:eq}) $\overline{\rho}(f(x_1,\ldots,x_n)) \neq \overline{\rho}(f(y_1,\ldots,y_n))$, which implies that $f_M(v_1, \ldots, v_n) \neq f_M(u_1, \ldots, u_n)$, for all $v_i  = \rho(x_i)$, $u_i  = \rho(y_i)$, $i=\overline{1,n}$. 
Therefore, for all $v_i$, $u_i$, $v_i \neq u_i$ implies $f_M(v_1, \ldots, v_n) \neq f_M(u_1, \ldots, u_n)$ holds as well, and hence the function $f_M$ is injective.
On the other hand, if $\overline{\rho}(f(x_1,\ldots,x_n) = f(y_1, \ldots, y_n)) = M$, then $\overline{\rho}(f(x_1,\ldots,x_n)) = \overline{\rho}(f(y_1,\ldots,y_n))$ and $\overline{\rho}(x_1 = y_1 \land \ldots \land x_n = y_n) = M$. Therefore,  $f_M(v_1, \ldots, v_n) = f_M(u_1, \ldots, u_n)$, for all $v_i \in \{\rho(x_i)\}$, $u_i \in \{\rho(y_i)\}$, $i=\overline{1,n}$ implies $\bigwedge_i v_i = u_i$ (because, in particular $\overline{\rho}(x_i = y_i)= M$), and $f_M$ is injective.

\medskip
As suggested by Example~\ref{ex:mgu}, our solution for finding an equivalent form $t \land \phi$ for $t_1 \land t_2$ requires the simulation of the unification algorithm shown in Figure~\ref{fig:unif} in ML.  
First, we have to encode unification problems as ML formulas:

\begin{definition}
\label{def:up}
For each unification problem $P = \{ v_1 \eq u_1, \ldots, v_n \eq u_n \}$ we define a corresponding ML predicate $\phi^P \eqbydef \bigwedge_{i = 1}^nv_i = u_i$. Also, $\phi^{\pmb\perp} = \bot$.
\end{definition}

A unification problem in solved form has a corresponding substitution.
These substitutions can be encoded as ML predicates called \emph{substitution patterns}:
\begin{definition}
\label{def:sp}
A \emph{substitution pattern} that corresponds to a substitution $\sigma = \{ x_i \mapsto u_i \mid i=1,\ldots,n \}$ is a predicate of the form $\phi^\sigma\eqbydef\bigwedge_{i=1}^m x_i=u_i$.
\end{definition}

For the particular case when $\sigma$ corresponds to a unification problem $P$ in solved form we have $\phi^{\sigma}=\phi^P$.
For a term pattern $t$, we use the same notation $t\sigma$ to denote the corresponding term pattern obtained after applying substitution~$\sigma$ to $t$, as follows:
 $x_i\sigma=u_i$ if $( x_i \mapsto u_i ) \in \sigma$; $x\sigma=x$ if $(x \mapsto \_) \not\in \sigma$; finally, $f(t_1,\ldots,t_n)\sigma=f(t_1\sigma,\ldots,t_n\sigma)$.

\begin{example}
\label{ex:terms}
Terms $t_1\eqbydef f(x, g(1), g(z))$ and $t_2\eqbydef f(g(y), g(y), g(g(x))$ are term patterns in ML. 
For the unifier $\sigma = \{ x \mapsto g(1), y\mapsto 1, z \mapsto g(g(1))\}$ of $t_1$ and $t_2$ the corresponding substitution pattern $\phi^\sigma$ is $x = g(1) \land y = 1 \land z = g(g(1))$. 
Now, both $t_1\sigma$ and $t_2\sigma$ are the same with the ML pattern $f(g(1), g(1), g(g(1)))$. 
\end{example}

One may be tempted to say that for every term $t$, $t\sigma$ is equal to $t \land \phi^\sigma$.
However, this is not always true. 
For instance, if $t$ is a variable $x\in\X$ such that $x\sigma=x$ and $\overline{\rho}(\phi^\sigma) = \emptyset$, and $\rho : \X \to M$ is a valuation, then $\overline{\rho}(x \land \phi^\sigma) =\rho(x) \cap \overline{\rho}(\phi^\sigma) = \emptyset \not= \{\rho(x)\} = \overline{\rho}(x) = \overline{\rho}(x\sigma)$.
The following lemma formalises the precise relation between $t\sigma$ and $t \land \phi^\sigma$:
\begin{lemma}
\label{lem:subst}
If $t$ is a term pattern and $\sigma$ a substitution then $F \models t\sigma\land\phi^\sigma \leftrightarrow t\land\phi^\sigma$.
\end{lemma}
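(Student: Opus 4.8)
The plan is to prove the biconditional $F \models t\sigma \land \phi^\sigma \leftrightarrow t \land \phi^\sigma$ by structural induction on the term pattern $t$, reducing to a semantic statement via Proposition~\ref{prop:derived}: it suffices to show $\overline{\rho}(t\sigma \land \phi^\sigma) = \overline{\rho}(t \land \phi^\sigma)$ for every model $M$ with $M \models F$ and every $\rho : \X \to M$. Since both sides are intersected with $\overline{\rho}(\phi^\sigma)$, the claim is trivial when $\overline{\rho}(\phi^\sigma) = \emptyset$, so I would fix attention on valuations $\rho$ with $\overline{\rho}(\phi^\sigma) = M_s$ (recall $\phi^\sigma$ is a predicate, so these are the only two cases). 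In that case, for each binding $(x_i \mapsto u_i) \in \sigma$ we have $\overline{\rho}(x_i = u_i) = M_s$, hence $\overline{\rho}(x_i) = \overline{\rho}(u_i)$ by Proposition~\ref{prop:eq}, i.e.\ $\rho(x_i) \in \overline{\rho}(u_i)$; since $u_i$ is a term pattern (built from functional symbols) its interpretation is a singleton, so in fact $\overline{\rho}(u_i) = \{\rho(x_i)\}$.

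The core of the argument is then a sublemma, proved by induction on $t$: if $\overline{\rho}(\phi^\sigma) = M_s$ then $\overline{\rho}(t\sigma) = \overline{\rho}(t)$. The base case $t = x$ splits on whether $x$ is in the domain of $\sigma$: if $x\sigma = x$ there is nothing to prove, and if $(x \mapsto u) \in \sigma$ then $\overline{\rho}(t\sigma) = \overline{\rho}(u) = \{\rho(x)\} = \overline{\rho}(x) = \overline{\rho}(t)$ by the observation above. For the inductive step $t = f(t_1,\ldots,t_n)$, we have $t\sigma = f(t_1\sigma,\ldots,t_n\sigma)$ by the definition of substitution on term patterns, and then $\overline{\rho}(f(t_1\sigma,\ldots,t_n\sigma)) = \bigcup\{f_M(v_1,\ldots,v_n) \mid v_i \in \overline{\rho}(t_i\sigma)\} = \bigcup\{f_M(v_1,\ldots,v_n) \mid v_i \in \overline{\rho}(t_i)\} = \overline{\rho}(f(t_1,\ldots,t_n))$, where the middle equality uses the induction hypothesis $\overline{\rho}(t_i\sigma) = \overline{\rho}(t_i)$ for each $i$. (One could alternatively derive this sublemma directly from Lemma~\ref{lem:substpattern} applied componentwise to each binding of $\sigma$, since $F \models \phi^\sigma \rightarrow (x_i = u_i)$ and on valuations where $\phi^\sigma$ holds we are in the situation of that lemma; but the direct induction is cleaner and self-contained.)

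Putting the pieces together: for $\rho$ with $\overline{\rho}(\phi^\sigma) = M_s$ we get $\overline{\rho}(t\sigma \land \phi^\sigma) = \overline{\rho}(t\sigma) \cap M_s = \overline{\rho}(t\sigma) = \overline{\rho}(t) = \overline{\rho}(t) \cap M_s = \overline{\rho}(t \land \phi^\sigma)$; for $\rho$ with $\overline{\rho}(\phi^\sigma) = \emptyset$ both sides are $\emptyset$. Hence $\overline{\rho}(t\sigma \land \phi^\sigma) = \overline{\rho}(t \land \phi^\sigma)$ for all $\rho$, and Proposition~\ref{prop:derived} gives $M \models t\sigma \land \phi^\sigma \leftrightarrow t \land \phi^\sigma$; since $M \models F$ was arbitrary, $F \models t\sigma \land \phi^\sigma \leftrightarrow t \land \phi^\sigma$.

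\textbf{Main obstacle.} The only subtle point — and the reason the naive claim ``$t\sigma = t \land \phi^\sigma$'' fails, as the excerpt notes — is the interaction with $\emptyset$: the equality $\overline{\rho}(t\sigma) = \overline{\rho}(t)$ can genuinely fail when $\overline{\rho}(\phi^\sigma) = \emptyset$, so one must be careful to invoke the sublemma only under the hypothesis $\overline{\rho}(\phi^\sigma) = M_s$, and to handle the empty case separately by absorbing it into the conjunction with $\phi^\sigma$. The second thing to get right is the justification that $u_i$ is interpreted as a singleton — this uses that $\sigma$'s range consists of term patterns and Remark~\ref{rem:oneelem} / the functionality of term patterns — which is what licenses turning $\rho(x_i) \in \overline{\rho}(u_i)$ into $\overline{\rho}(u_i) = \{\rho(x_i)\}$. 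Everything else is a routine structural induction mirroring the proof of Lemma~\ref{lem:substpattern}.
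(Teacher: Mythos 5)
Your proposal is correct and follows essentially the same route as the paper: reduce the biconditional to the semantic equality $\overline{\rho}(t\sigma)\cap\overline{\rho}(\phi^\sigma)=\overline{\rho}(t)\cap\overline{\rho}(\phi^\sigma)$, dispose of the case $\overline{\rho}(\phi^\sigma)=\emptyset$ trivially, and in the case $\overline{\rho}(\phi^\sigma)=M_s$ prove $\overline{\rho}(t\sigma)=\overline{\rho}(t)$ by structural induction with the same split on whether $x$ is in the domain of $\sigma$. Your extra remark about the singleton interpretation of the $u_i$ only makes explicit a step the paper leaves implicit.
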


\begin{proof}
Let us choose an arbitrary model $M$. We have to prove that $M \models t\sigma\land\phi^\sigma \leftrightarrow t\land\phi^\sigma$. 
By Proposition~\ref{prop:eq}, $M \models t\sigma\land\phi^\sigma \leftrightarrow t\land\phi^\sigma$ iff $\overline{\rho}(t\sigma\land\phi^\sigma) = \overline{\rho}(t\land\phi^\sigma)$ for any $\rho: \Var \to M$. By Proposition~\ref{prop:derived}, this holds iff $\overline{\rho}(t) \cap \overline{\rho}(\phi^\sigma) = \overline{\rho}(t\sigma) \cap \overline{\rho}(\phi^\sigma)$. If $\overline{\rho}(\phi^\sigma) = \emptyset$ then this equality holds trivially. 
If $\overline{\rho}(\phi^\sigma) = M$\footnote{This should be $M_s$ where $s\in S$ is the sort of $t\sigma$, but we choose not to show the sort explicitly.}
then 
$\overline{\rho}(t\sigma) \cap M = \overline{\rho}(t) \cap M$  iff $\overline{\rho}(t\sigma) = \overline{\rho}(t)$.
We proceed by structural induction on $t$:

\begin{itemize}
\item \emph{Base case.} $t = x$. Recall that $\phi^\sigma\eqbydef\bigwedge_{i=1}^m x_i=u_i$. We have two sub-cases:
	\begin{enumerate}
	\item $x\in\{x_1,\ldots,x_m\}$: since $\overline{\rho}(\phi^\sigma) = M$ then $\overline{\rho}(\bigwedge_{i=1}^m x_i=u_i) = M$ which implies 
$\bigcap_i^m\overline{\rho}(x_i=u_i) = M$. Thus, $\overline{\rho}(x_i=u_i) = M$ iff  $\overline{\rho}(x_i) =\overline{\rho}(u_i)$ (using Proposition~\ref{prop:eq}) and in particular $\rho(x)=\overline{\rho}(u) = \overline{\rho}(x\sigma)$.

	\item $x\not\in\{x_1,\ldots,x_m\}$: in this case $x\sigma=x$ and $\overline{\rho}(x\sigma) = \rho(x)$.
	\end{enumerate}
\item \emph{Inductive step.} $t = f(t_1, \ldots, t_n)$ and the inductive hypothesis holds for all subterm patterns $t_1, \ldots, t_n$.
Then: $\overline{\rho}(f(t_1, \ldots, t_n)\sigma)= $
$~\overline{\rho}(f(t_1\sigma, \ldots, t_n\sigma))=$
$~ f_M(\overline{\rho}(t_1\sigma), \ldots, \overline{\rho}(t_n\sigma))) =$
$~f_M(\overline{\rho}(t_1), \ldots, \overline{\rho}(t_n)))  =$
$~\overline{\rho}(f(t_1, \ldots, t_n))$, using the inductive hypothesis and Definitions~\ref{def:func} and~\ref{def:valuation}. \qed
\end{itemize}
\end{proof}

Lemma~\ref{lem:sim} shows that the steps performed by the unification algorithm (Figure~\ref{fig:unif}) can be encoded as implications in ML. Note that we only consider the case when the most general unifier exists.

\begin{lemma}
\label{lem:sim}
In the context of Figure~\ref{fig:unif}, if $P \Rightarrow P'$ and $P'\neq\pmb{\bot}$ then $F \models \phi^P \rightarrow \phi^{P'}$, for all unification problems $P$ and $P'$.
\end{lemma}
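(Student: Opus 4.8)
The plan is to fix an arbitrary model $M$ with $M \models F$ and, by Proposition~\ref{prop:derived}, reduce $F \models \phi^P \rightarrow \phi^{P'}$ to proving $\overline{\rho}(\phi^P) \subseteq \overline{\rho}(\phi^{P'})$ for every $\rho : \Var \to M$. Since by Definition~\ref{def:up} both $\phi^P$ and $\phi^{P'}$ are conjunctions of equality patterns, they are predicates, so each of $\overline{\rho}(\phi^P)$, $\overline{\rho}(\phi^{P'})$ is $M$ or $\emptyset$; hence the inclusion is automatic unless $\overline{\rho}(\phi^P) = M$, in which case it suffices to show $\overline{\rho}(\phi^{P'}) = M$. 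With such a $\rho$ fixed, I would then do a case analysis on the rule of Figure~\ref{fig:unif} that justifies $P \Rightarrow P'$; the rules \textbf{Symbol clash} and \textbf{Occurs check} are ruled out by the hypothesis $P' \neq \pmb\perp$, leaving four cases. Throughout, all terms occurring in the problems are term patterns, so Propositions~\ref{prop:eq}, \ref{prop:fpatt} and Lemma~\ref{lem:substpattern} apply to them.

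The two easy cases are \textbf{Delete} and \textbf{Orient}. For \textbf{Delete}, $P \cup \{t \eq t\} \Rightarrow P$, we have $\phi^{P\cup\{t\eq t\}} = \phi^P \land (t = t)$, and $\overline{\rho}(\phi^P \land (t=t)) = M$ already forces $\overline{\rho}(\phi^P) = M$; at the proof-system level this is conjunction elimination together with equality introduction~\ref{equalityintroduction}. For \textbf{Orient}, $P \cup \{f(t_1,\ldots,t_n) \eq x\} \Rightarrow P \cup \{x \eq f(t_1,\ldots,t_n)\}$, the only changed conjunct passes from $f(t_1,\ldots,t_n) = x$ to $x = f(t_1,\ldots,t_n)$; by Proposition~\ref{prop:eq}, $\overline{\rho}(f(t_1,\ldots,t_n) = x) = M$ iff $\overline{\rho}(f(t_1,\ldots,t_n)) = \overline{\rho}(x)$ iff $\overline{\rho}(x) = \overline{\rho}(f(t_1,\ldots,t_n))$ iff $\overline{\rho}(x = f(t_1,\ldots,t_n)) = M$, i.e. the equality pattern is semantically symmetric, so the two predicates have the same denotation.

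The substantive case is \textbf{Decomposition}, $P \cup \{f(t_1,\ldots,t_n) \eq f(t'_1,\ldots,t'_n)\} \Rightarrow P \cup \{t_1 \eq t'_1, \ldots, t_n \eq t'_n\}$. Assuming $\overline{\rho}(\phi^P \land f(t_1,\ldots,t_n) = f(t'_1,\ldots,t'_n)) = M$, we get $\overline{\rho}(\phi^P) = M$ and $\overline{\rho}(f(t_1,\ldots,t_n)) = \overline{\rho}(f(t'_1,\ldots,t'_n))$, and it remains to derive $\overline{\rho}(t_i) = \overline{\rho}(t'_i)$ for each $i$. I would invoke the injectivity axiom~(\ref{eq:inj}), which $M$ validates since $M \models F$. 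The delicate point — and the main obstacle — is that~(\ref{eq:inj}) is stated over variables $x_1,\ldots,x_n,y_1,\ldots,y_n$, whereas here it must be used at the term patterns $t_i, t'_i$. This is handled using the fact that term patterns are functional (Definition~\ref{def:terms} and Remark~\ref{rem:oneelem}): each $\overline{\rho}(t_i)$ and $\overline{\rho}(t'_i)$ is a singleton, so taking $\rho'$ to agree with $\rho$ and send $x_i$, $y_i$ to the unique elements of $\overline{\rho}(t_i)$, $\overline{\rho}(t'_i)$, we get $\overline{\rho'}(f(x_1,\ldots,x_n)) = \overline{\rho}(f(t_1,\ldots,t_n))$ and $\overline{\rho'}(f(y_1,\ldots,y_n)) = \overline{\rho}(f(t'_1,\ldots,t'_n))$; validity of~(\ref{eq:inj}) at $\rho'$ then yields $\rho'(x_i) = \rho'(y_i)$, i.e. $\overline{\rho}(t_i) = \overline{\rho}(t'_i)$, hence $\overline{\rho}(t_i = t'_i) = M$ by Proposition~\ref{prop:eq}. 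Combined with $\overline{\rho}(\phi^P) = M$ this gives $\overline{\rho}(\phi^{P'}) = M$. (At the proof-system level this is functional substitution~\ref{functionalsubstitution} applied to axiom~(\ref{eq:inj}).)

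Finally, \textbf{Elimination}, $P \cup \{x \eq t\} \Rightarrow P\{x \mapsto t\} \cup \{x \eq t\}$ with $x \notin \var(t)$ and $x \in \var(P)$. First I would note that, because $t$ is a term pattern, term substitution and the pattern substitution of the excerpt coincide, so $\phi^{P\{x\mapsto t\}}$ is literally $\phi^P[t/x]$, and the two predicates to compare are $\phi^P \land (x = t)$ and $\phi^P[t/x] \land (x = t)$. Assuming $\overline{\rho}(\phi^P \land (x=t)) = M$, we get $\overline{\rho}(\phi^P) = M$ and $\overline{\rho}(x) = \overline{\rho}(t)$. Now the induction in the proof of Lemma~\ref{lem:substpattern} in fact establishes $\overline{\rho}(\varphi[t/x]) = \overline{\rho}(\varphi)$ for any single valuation $\rho$ satisfying $\overline{\rho}(x) = \overline{\rho}(t)$; applying this to $\varphi = \phi^P$ gives $\overline{\rho}(\phi^P[t/x]) = \overline{\rho}(\phi^P) = M$, and adjoining the unchanged conjunct $x = t$ gives $\overline{\rho}(\phi^{P'}) = M$. (At the proof-system level this is equality elimination~\ref{equalityelimination} with $\varphi_1 = x$, $\varphi_2 = t$, $\varphi = \phi^P$.) This completes the case analysis and hence the proof; as indicated, the only step requiring genuine care is \textbf{Decomposition}, owing to the variable/term-pattern mismatch in the injectivity axiom.
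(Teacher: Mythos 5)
Your proposal is correct and follows essentially the same route as the paper's proof: reduce to a set inclusion via Proposition~\ref{prop:derived}, use the predicate dichotomy ($M$ or $\emptyset$), and do the same four-case analysis, handling \textbf{Decomposition} with the injectivity axiom and \textbf{Elimination} with Lemma~\ref{lem:substpattern}. If anything, you are more careful than the paper at the two delicate points — instantiating axiom~(\ref{eq:inj}), which is stated over variables, at functional term patterns, and noting that Lemma~\ref{lem:substpattern} must be applied pointwise at the single valuation $\rho$ with $\overline{\rho}(x)=\overline{\rho}(t)$ rather than under the global hypothesis $F\models x=t$.
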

\begin{proof}
We have to prove that for all models $M$ and for all valuations $\rho:\X \to M$, $\overline{\rho}(\phi^P \rightarrow \phi^{P'}) = M$, that is (by Proposition~\ref{prop:derived}), $\overline{\rho}(\phi^P) \subseteq \overline{\rho}(\phi^{P'})$. Since $\phi^P$ is a predicate, $\overline{\rho}(\phi^P)$ is either $\emptyset$ (in this case the lemma holds trivially) or $M$. Let $\overline{\rho}(\phi^P)= M$; we proceed by case analysis on the rule applied for step $P \Rightarrow P'$:
\begin{enumerate}
\item \label{e:del} {\bf Delete}: $\overline{\rho}(\phi^{P \cup \{t \eq t\}}) = \overline{\rho}(\phi^{P} \land t = t) = \overline{\rho}(\phi^{P}) \cap \overline{\rho}(t = t) \subseteq \overline{\rho}(\phi^{P})$.

\item \label{e:dec} {\bf Decomposition}: On the one hand we have $\overline{\rho}(\phi^{P \cup \{f(t_1, \ldots, t_n) \eq f(t'_1, \ldots, t'_n)\}}) = \overline{\rho}(\phi^{P} \land f(t_1, \ldots, t_n) = f(t'_1, \ldots, t'_n)) = \overline{\rho}(\phi^{P}) \cap \overline{\rho}(f(t_1, \ldots, t_n) \!=\! f(t'_1, \ldots, t'_n))$. 
On the other hand, $\overline{\rho}(\phi^{P \cup \{t_1 \eq t'_1, \ldots, t_n \eq t'_n\}}) = \overline{\rho}(\phi^{P} \land t_1 = t'_1 \land \ldots \land t_n = t'_n) = \overline{\rho}(\phi^{P}) \cap \overline{\rho}(t_1 = t'_1) \cap \ldots \cap \overline{\rho}(t_n = t'_n)$.

If $\overline{\rho}(f(t_1, \ldots, t_n) = f(t'_1, \ldots, t'_n)) = \emptyset$ then the inclusion holds trivially. 

If $\overline{\rho}(f(t_1, \ldots, t_n) \!=\! f(t'_1, \ldots, t'_n)) \!=\! M$, then $\overline{\rho}(f(t_1, \ldots, t_n)) \!=\! \overline{\rho}(f(t'_1, \ldots, t'_n))$ (by Proposition~\ref{prop:eq}) iff $f_M(\overline{\rho}(t_1), \ldots, \overline{\rho}(t_n)) = f_M(\overline{\rho}(t'_1), \ldots, \overline{\rho}(t'_n))$ (recall that $f$ is a functional symbol and $f_M$ is an injective function) which implies $\overline{\rho}(t_1) = \overline{\rho}(t'_1)$, \ldots, $\overline{\rho}(t_n) = \overline{\rho}(t'_n)$. 
Thus, $\overline{\rho}(t_1 = t'_1) = \cdots = \overline{\rho}(t_n = t'_n) = M$.


\item \label{e:ori} {\bf Orient}: $\overline{\rho}(\phi^{P \cup \{f(t_1, \ldots, t_n) \eq x\}}) = \overline{\rho}(\phi^{P} \land f(t_1, \ldots, t_n) = x) = \overline{\rho}(\phi^{P} \land x = f(t_1, \ldots, t_n)) = \overline{\rho}(\phi^{P \cup \{x \eq f(t_1, \ldots, t_n)\}})$.

\item \label{e:elim} {\bf Elimination}: we have $x \not\in \var(t)$, $x \in \var(P)$and we have to show that $\overline{\rho}(\phi^{P \cup \{x \eq t \}}) = \overline{\rho}(\phi^{P}) \cap \overline{\rho}(x = t)$ is included in the set $\overline{\rho}(\phi^{P\{x \mapsto t\} \cup \{ x \eq t \}}) = \overline{\rho}(\phi^{P\{x \mapsto t\}}) \cap \overline{\rho}(x = t) =  \overline{\rho}(\phi^{P}[t/x]) \cap \overline{\rho}(x = t)$. 
If $\overline{\rho}(x = t) = \emptyset$ the inclusion holds trivially. The interesting case is when $\overline{\rho}(x = t) = M$, i.e., $x =t$. In this case it is sufficient to prove $\overline{\rho}(\phi^{P}) = \overline{\rho}(\phi^{P}[t/x])$ which follows from Lemma~\ref{lem:substpattern}.
\item {\bf Occurs check} and {\bf Symbol clash} cannot be applied because $P' \neq \pmb{\bot}$.
 \qed
\end{enumerate}
\end{proof}

\begin{lemma}
\label{lem:unifback}
If $\sigma$ is the most general unifier of $t_1$ and $t_2$ then $F \models (t_1 = t_2) \rightarrow \phi^\sigma$.
\end{lemma}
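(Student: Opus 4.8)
\textbf{Proof plan for Lemma~\ref{lem:unifback}.}
The plan is to connect this statement to the machinery already developed, in particular Lemma~\ref{lem:sim} and the properties of the Martelli--Montanari algorithm recalled in Remark~\ref{rem:unifalg}. Since $\sigma$ is the most general unifier of $t_1$ and $t_2$, the unification problem $P_0 \eqbydef \{ t_1 \eq t_2 \}$ has a solution, so by the \emph{progress} and \emph{termination} properties there is a maximal transformation sequence $P_0 \Rightarrow P_1 \Rightarrow \cdots \Rightarrow P_k$ with $P_k$ in solved form; by the \emph{most general unifier} property $P_k \neq \pmb\perp$ and (up to renaming) $\sigma_{P_k} = \sigma$, hence $\phi^{P_k} = \phi^\sigma$ by the remark following Definition~\ref{def:sp}. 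Crucially, none of the intermediate $P_i$ equals $\pmb\perp$ (once $\pmb\perp$ is reached the sequence stops, and here it ends in a solved form), so Lemma~\ref{lem:sim} applies to every step.

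The core argument is then a simple chaining. Applying Lemma~\ref{lem:sim} to each step $P_{i} \Rightarrow P_{i+1}$ gives $F \models \phi^{P_i} \rightarrow \phi^{P_{i+1}}$ for $i = 0, \ldots, k-1$. By transitivity of implication in ML (which follows from Proposition~\ref{prop:derived}: $\overline\rho(\phi^{P_0}) \subseteq \overline\rho(\phi^{P_1}) \subseteq \cdots \subseteq \overline\rho(\phi^{P_k})$ for every model $M$ and valuation $\rho$), we obtain $F \models \phi^{P_0} \rightarrow \phi^{P_k}$, i.e. $F \models \phi^{P_0} \rightarrow \phi^\sigma$. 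Finally, $\phi^{P_0} = \phi^{\{t_1 \eq t_2\}} = (t_1 = t_2)$ by Definition~\ref{def:up}, which yields exactly $F \models (t_1 = t_2) \rightarrow \phi^\sigma$.

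The main obstacle I anticipate is not in the chaining itself but in justifying that the transformation sequence can be taken to \emph{avoid} $\pmb\perp$ entirely, so that Lemma~\ref{lem:sim} is applicable at every step. This rests on combining two facts from Remark~\ref{rem:unifalg}: termination guarantees a maximal sequence exists, and the ``most general unifier'' property guarantees that \emph{because a solution exists}, any maximal sequence ends in a solved form (never in $\pmb\perp$); since the rules that produce $\pmb\perp$ (\textbf{Symbol clash}, \textbf{Occurs check}) would immediately terminate the sequence at $\pmb\perp$, the fact that it terminates in a solved form means $\pmb\perp$ never appears. A second, minor point to handle carefully is the ``up to renaming'' in the statement that $\sigma_{P_k} = \sigma$: one should note that $\phi^\sigma$ and $\phi^{\sigma'}$ are equivalent patterns when $\sigma$ and $\sigma'$ differ by a renaming, or simply fix $\sigma$ to be the mgu actually produced by the algorithm, which is harmless since the lemma is about the existence of such an equivalent predicate. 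Both points are routine given the cited results, so the proof is short.
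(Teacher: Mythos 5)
Your proof is correct and takes essentially the same route as the paper: both chain Lemma~\ref{lem:sim} along the transformation sequence $\{t_1 \eq t_2\} \Rightarrow \cdots \Rightarrow P^\sigma$ produced by the algorithm and use transitivity of implication. You are in fact more careful than the paper in justifying that no intermediate problem is $\pmb\perp$ and in addressing the uniqueness of the mgu only up to renaming.
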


\begin{proof}
Note that $\sigma$ is obtained using the unification algorithm in Figure~\ref{fig:unif}. The algorithm generates a finite sequence  $\{ t_1 \eq t_2 \} \Rightarrow \cdots \Rightarrow P^\sigma$ where $P^\sigma$ is in solved form and corresponds to mgu $\sigma$. If we apply Lemma~\ref{lem:sim} for each step in this sequence we also have a sequence of valid implications $(t_1 = t_2) \rightarrow \cdots \rightarrow \phi^\sigma$.

\end{proof}

The reversed implication is given by the following lemma:

\begin{lemma}
\label{lem:unif}
If $\sigma$ is a unifier of term patterns $t_1$ and $t_2$ then $\models \phi^\sigma\rightarrow (t_1 = t_2)$.
\end{lemma}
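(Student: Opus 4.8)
The plan is to argue semantically. Fix an arbitrary model $M$ and an arbitrary valuation $\rho : \Var \to M$; by Proposition~\ref{prop:derived} it suffices to show $\overline{\rho}(\phi^\sigma) \subseteq \overline{\rho}(t_1 = t_2)$, since then $M \models \phi^\sigma \rightarrow (t_1 = t_2)$ for every $M$, i.e. $\models \phi^\sigma \rightarrow (t_1 = t_2)$. Because $\phi^\sigma = \bigwedge_{i=1}^m x_i = u_i$ (where $\sigma = \{x_i \mapsto u_i \mid i = 1,\ldots,m\}$) is a conjunction of equality patterns, it is a predicate, so $\overline{\rho}(\phi^\sigma)$ is either $\emptyset$ — in which case the inclusion is immediate — or the whole carrier $M$.

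So assume $\overline{\rho}(\phi^\sigma) = M$. Then each conjunct satisfies $\overline{\rho}(x_i = u_i) = M$, and hence by Proposition~\ref{prop:eq} we get $\overline{\rho}(x_i) = \overline{\rho}(u_i)$ for every $i$. The key intermediate fact I would establish from this is: $\overline{\rho}(t) = \overline{\rho}(t\sigma)$ for every term pattern $t$. This is exactly what the structural induction in the proof of Lemma~\ref{lem:subst} delivers when $\overline{\rho}(\phi^\sigma) = M$: in the base case $t = x$, either $x \mapsto u \in \sigma$ and $\overline{\rho}(x) = \overline{\rho}(u) = \overline{\rho}(x\sigma)$ by the above, or $x$ is untouched and $x\sigma = x$; in the inductive step $t = f(t_1,\ldots,t_n)$ one pushes the identity through $f_M$ using Definition~\ref{def:valuation} and the inductive hypotheses.

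Applying this to $t_1$ and $t_2$ gives $\overline{\rho}(t_1) = \overline{\rho}(t_1\sigma)$ and $\overline{\rho}(t_2) = \overline{\rho}(t_2\sigma)$. Since $\sigma$ is a unifier of $t_1$ and $t_2$, the term patterns $t_1\sigma$ and $t_2\sigma$ are \emph{syntactically the same}, so $\overline{\rho}(t_1\sigma) = \overline{\rho}(t_2\sigma)$ and therefore $\overline{\rho}(t_1) = \overline{\rho}(t_2)$. By Proposition~\ref{prop:eq} this yields $\overline{\rho}(t_1 = t_2) = M$, so $\overline{\rho}(\phi^\sigma) = M \subseteq \overline{\rho}(t_1 = t_2)$, which closes the remaining case and the proof.

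I expect the only genuine subtlety to be the bridge between syntactic unification and the ML semantics, namely the identity $\overline{\rho}(t) = \overline{\rho}(t\sigma)$ under $\overline{\rho}(\phi^\sigma) = M$; this is already isolated (via Lemma~\ref{lem:subst}), so the remaining work is just the routine case split on the value of the predicate $\phi^\sigma$ together with the observation that ``$\sigma$ unifies $t_1,t_2$'' is the statement ``$t_1\sigma$ and $t_2\sigma$ are the same term pattern''. One should also keep $\phi^\sigma$ and $t_1 = t_2$ formed with a common definedness target sort so the implication is well-sorted, but this is cosmetic.
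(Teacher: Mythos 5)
Your proposal is correct and follows essentially the same route as the paper's own proof: reduce to the inclusion $\overline{\rho}(\phi^\sigma) \subseteq \overline{\rho}(t_1 = t_2)$ via Proposition~\ref{prop:derived}, split on whether the predicate $\phi^\sigma$ evaluates to $\emptyset$ or $M$, use Lemma~\ref{lem:subst} to get $\overline{\rho}(t_i) = \overline{\rho}(t_i\sigma)$ in the nontrivial case, and conclude from the syntactic identity $t_1\sigma = t_2\sigma$. The only cosmetic difference is that you invoke the structural induction inside Lemma~\ref{lem:subst} directly, whereas the paper derives the same identity from that lemma's statement by intersecting with $\overline{\rho}(\phi^\sigma) = M$.
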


\begin{proof}
We have to prove that for all models $M$ and for all valuations $\rho:\X \to M$, $\overline{\rho}(\phi^\sigma\rightarrow (t_1 = t_2)) = M$. 
By Proposition~\ref{prop:derived}, we have to prove that $\overline{\rho}(\phi^\sigma) \subseteq \overline{\rho}(t_1 = t_2))$. 
The case $\overline{\rho}(\phi^\sigma) = \emptyset$ is trivial.
When $\overline{\rho}(\phi^\sigma) = M$ it is sufficient to prove that $\overline{\rho}(t_1 = t_2) = M$, namely, $\overline{\rho}(t_1) = \overline{\rho}(t_2)$. 

From Lemma~\ref{lem:subst} we have $F\models t_1\sigma\land\phi^\sigma \leftrightarrow t_1\land\phi^\sigma$ and $F \models t_2\sigma\land\phi^\sigma \leftrightarrow t_2\land\phi^\sigma$.
This implies that $\overline{\rho}(t_1\sigma) \cap \overline{\rho}(\phi^\sigma) = \overline{\rho}(t_1) \cap \overline{\rho}(\phi^\sigma)$ and $\overline{\rho}(t_2\sigma) \cap \overline{\rho}(\phi^\sigma) = \overline{\rho}(t_2) \cap \overline{\rho}(\phi^\sigma)$. Because $\overline{\rho}(\phi^\sigma) = M$, we have $\overline{\rho}(t_1) = \overline{\rho}(t_1\sigma)$ ($\spadesuit$) and $\overline{\rho}(t_2\sigma) = \overline{\rho}(t_2)$ ($\clubsuit$). 

Since $\sigma$ is a unifier, then $t_1\sigma$ and $t_2\sigma$ are syntactical equal. This implies $\overline{\rho}(t_1\sigma) = \overline{\rho}(t_2\sigma)$; by ($\clubsuit$) and ($\spadesuit$) we obtain $\overline{\rho}(t_1) = \overline{\rho}(t_2\sigma) = \overline{\rho}(t_2\sigma) = \overline{\rho}(t_2)$. \qed
\end{proof}

\begin{lemma}
\label{th:equiv}
If $\sigma$ is the mgu of $t_1$ and $t_2$ then $F \models (t_1 = t_2) \leftrightarrow \phi^\sigma$.
\end{lemma}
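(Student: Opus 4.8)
The plan is to obtain the biconditional by combining the two one-directional implications already established, namely Lemma~\ref{lem:unifback} and Lemma~\ref{lem:unif}, and then closing under propositional reasoning. First I would observe that, by Definition~\ref{def:unifier}, an mgu is in particular a unifier of $t_1$ and $t_2$; hence Lemma~\ref{lem:unif} applies and gives $\models \phi^\sigma \rightarrow (t_1 = t_2)$, which in turn yields $F \models \phi^\sigma \rightarrow (t_1 = t_2)$ since validity implies semantical consequence from any axiom set. Second, because $\sigma$ is assumed to be the \emph{most general} unifier, Lemma~\ref{lem:unifback} applies and gives the reverse implication $F \models (t_1 = t_2) \rightarrow \phi^\sigma$.

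It then remains to merge these two facts. I would fix an arbitrary model $M$ with $M \models F$ and an arbitrary valuation $\rho : \Var \to M$. From the two implications and Proposition~\ref{prop:derived} we get $\overline{\rho}(\phi^\sigma) \subseteq \overline{\rho}(t_1 = t_2)$ and $\overline{\rho}(t_1 = t_2) \subseteq \overline{\rho}(\phi^\sigma)$, so $\overline{\rho}(t_1 = t_2) = \overline{\rho}(\phi^\sigma)$. Again by Proposition~\ref{prop:derived} (the characterisation of $\leftrightarrow$), this means $M \models (t_1 = t_2) \leftrightarrow \phi^\sigma$, and since $M$ was an arbitrary model of $F$ we conclude $F \models (t_1 = t_2) \leftrightarrow \phi^\sigma$. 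Alternatively, one can stay purely propositional: unfolding $\varphi_1 \leftrightarrow \varphi_2 \eqbydef (\varphi_1 \rightarrow \varphi_2) \land (\varphi_2 \rightarrow \varphi_1)$ and using that semantical consequence is closed under conjunction introduction (the interpretation of $\land$ being intersection) immediately turns the two implications into the biconditional.

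There is essentially no technical obstacle here: all the substance is already carried by Lemmas~\ref{lem:subst}, \ref{lem:sim}, \ref{lem:unifback} and \ref{lem:unif}. The only point worth emphasising in the write-up is the asymmetry in the hypotheses used for the two directions: $\phi^\sigma \rightarrow (t_1 = t_2)$ holds for \emph{every} unifier $\sigma$, whereas $(t_1 = t_2) \rightarrow \phi^\sigma$ genuinely needs $\sigma$ to be \emph{most general} (it is proved by replaying the Martelli--Montanari run and invoking Lemma~\ref{lem:sim} step by step). So the statement should indeed be phrased for the mgu, as it is.
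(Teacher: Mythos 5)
Your proposal is correct and matches the paper's proof, which likewise obtains the equivalence as an immediate consequence of Lemmas~\ref{lem:unif} and~\ref{lem:unifback}. The extra detail you supply (passing from validity to semantical consequence, and combining the two inclusions via Proposition~\ref{prop:derived}) is exactly the routine bookkeeping the paper leaves implicit.
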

\begin{proof}
Consequence of Lemmas~\ref{lem:unif} and~\ref{lem:unifback}. \qed
\end{proof}


Now we are ready establish the main contribution of this section, namely that the syntactic unification algorithm is \emph{sound} for semantic unification in ML:

\begin{theorem}[Soundness]
\label{th:main}
Let $\sigma$ be the most general unifier of $t_1$ and $t_2$ obtained by applying the algorithm shown in Figure~\ref{fig:unif} to the unification problem $\{t_1 \eq t_2\}$.
Then $F \models t_1 \land t_2 = t_1 \land \phi^\sigma$ and $F \models t_1 \land t_2 = t_2 \land \phi^\sigma$.
\end{theorem}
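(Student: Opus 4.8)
The plan is to reduce the theorem to the equivalence already established in Lemma~\ref{th:equiv}, together with the functional-pattern identity of Proposition~\ref{prop:fpatt}. First I would observe that $t_1$ and $t_2$ are term patterns, hence functional in $F$, and they have the same sort, so Proposition~\ref{prop:fpatt} applies: $F \models (t_1 \land t_2) = t_1 \land (t_1 = t_2)$. Next, since $\sigma$ is the most general unifier of $t_1$ and $t_2$, Lemma~\ref{th:equiv} gives $F \models (t_1 = t_2) \leftrightarrow \phi^\sigma$. By Proposition~\ref{prop:eq} (third item), an equivalence between predicates upgrades to an equality, so $F \models (t_1 = t_2) = \phi^\sigma$, and then equality elimination (rule \ref{equalityelimination} of the proof system, or simply the semantic characterisation via Proposition~\ref{prop:eq}) lets us replace $(t_1 = t_2)$ by $\phi^\sigma$ inside the conjunction, yielding $F \models t_1 \land (t_1 = t_2) = t_1 \land \phi^\sigma$. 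Chaining the two equalities gives $F \models t_1 \land t_2 = t_1 \land \phi^\sigma$.

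For the symmetric statement, I would run the same argument starting from the other instance of Proposition~\ref{prop:fpatt}, namely $F \models (t_1 \land t_2) = (t_2 \land t_1) = t_2 \land (t_2 = t_1)$, using commutativity of $\land$ (a propositional tautology, hence derivable) and symmetry of equality. Since $t_2 = t_1$ is equivalent to $t_1 = t_2$ and hence to $\phi^\sigma$, the same substitution step gives $F \models t_1 \land t_2 = t_2 \land \phi^\sigma$. Alternatively, one can note that from $F \models t_1 \land t_2 = t_1 \land \phi^\sigma$ and $F \models t_1 \land t_2 = t_2 \land \phi^\sigma$ it would follow that $F \models t_1 \land \phi^\sigma = t_2 \land \phi^\sigma$, but it is cleaner to derive each of the two equalities directly from its own instance of Proposition~\ref{prop:fpatt}.

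I should make explicit the one semantic verification that the substitution step rests on: for an arbitrary model $M$ and valuation $\rho$, since $(t_1 = t_2)$ and $\phi^\sigma$ are predicates and $F \models (t_1=t_2)\leftrightarrow\phi^\sigma$, Proposition~\ref{prop:derived} gives $\overline{\rho}(t_1 = t_2) = \overline{\rho}(\phi^\sigma)$, whence $\overline{\rho}(t_1) \cap \overline{\rho}(t_1 = t_2) = \overline{\rho}(t_1) \cap \overline{\rho}(\phi^\sigma)$, i.e.\ $\overline{\rho}(t_1 \land (t_1=t_2)) = \overline{\rho}(t_1 \land \phi^\sigma)$; by Proposition~\ref{prop:eq} this is exactly $M \models t_1 \land (t_1=t_2) = t_1 \land \phi^\sigma$, and since $M$ was arbitrary among models of $F$, $F \models t_1 \land (t_1=t_2) = t_1 \land \phi^\sigma$.

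I do not expect a serious obstacle here: the theorem is essentially an assembly of Proposition~\ref{prop:fpatt}, Lemma~\ref{th:equiv}, and the predicate-substitution reasoning. The only point requiring care is bookkeeping about which statements hold under $F$ versus unconditionally ($\models$): Lemma~\ref{lem:unif} is stated with plain $\models$ while Lemma~\ref{lem:unifback} (and hence Lemma~\ref{th:equiv}) needs the injectivity axioms in $F$, so the final result is correctly stated as a consequence of $F$; Proposition~\ref{prop:fpatt} is unconditional and so poses no difficulty. I would also double-check that term patterns indeed qualify as functional patterns of a common sort so that Proposition~\ref{prop:fpatt} is legitimately invoked — this follows from Definition~\ref{def:terms} and Remark~\ref{rem:oneelem}.
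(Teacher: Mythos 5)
Your proposal is correct and follows essentially the same route as the paper's own proof: apply Proposition~\ref{prop:fpatt} to get $F \models t_1 \land t_2 = t_1 \land (t_1 = t_2)$, invoke Lemma~\ref{th:equiv} for $F \models (t_1 = t_2) \leftrightarrow \phi^\sigma$, and conclude by replacing the equality with $\phi^\sigma$ inside the conjunction, with the second equality obtained symmetrically. The paper compresses the substitution step into a single ``Therefore''; your explicit semantic justification of it via Proposition~\ref{prop:derived} and Proposition~\ref{prop:eq} is a welcome elaboration, not a deviation.
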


\begin{proof}
We have $F \models t_1 \land t_2 = t_1 \land (t_1 = t_2)$ by Proposition~\ref{prop:fpatt}, and $F \models (t_1 = t_2) \leftrightarrow \phi^\sigma$ by Lemma~\ref{th:equiv}. Therefore, $F \models t_1 \land t_2 = t_1 \land \phi^\sigma$. The second conclusion is obtained in a similar way.
\end{proof}
%

Theorem~\ref{th:main} states that if the unification algorithm successfully terminates, then the most general unifier supplies the constraint pattern needed to express the semantic unifier as a conjunction of a structural pattern and a constraint.


\subsubsection{Completeness.}
\label{sec:completeness}
An interesting question to ask here is what happens when the input term patterns are not unifiable? 
In such a case, the unification algorithm fails and the sequence of transformations over the term patterns ends with $\pmb{\bot}$.
In fact, the condition $P'\neq \pmb{\bot}$ in Lemma~\ref{lem:sim} prevents exactly this situation to happen. 
In order to remove this condition, one needs to prove $\phi^P \Rightarrow \bot$ when {\bf Occurs check} and {\bf Symbol clash} apply. 

The injectivity axiom is not enough to prove these properties and stronger axioms are needed.
Obviously, a tempting alternative is to use  constructors instead of injections.
In~\cite{rosu-2017-lmcs}, the \emph{constructors} are defined as follows:\vspace*{-1ex}
\begin{itemize}
\small
\item {\bf{No junk}}: \small$F \models {\bf{\bigvee}}_s {\exists x_1:s_1\ldots\exists x_m:s_m \,.\, c(x_1, \ldots, x_m)}$, where $c \in \Sigma_{s_1 \ldots s_m, s}$;
\item {\bf{No confusion, different constructors}}:\\\small$F \models \lnot(c(x_1, \ldots, x_m) \land c'(y_1, \ldots, y_n))$, with $c \!\neq\!c'$, $c\!\in\! \Sigma_{s_1 .. s_m, s}$, and $c \!\in\! \Sigma_{s_1 .. s_n, s}$.
\item {\bf{No confusion, same constructors}}:\\\small$F \models c(x_1,\ldots,x_m) \land c(y_1,\ldots,y_m) \rightarrow c(x_1 \land y_1,\ldots,x_m \land y_m)$, with $c \in \Sigma_{s_1 \ldots s_m, s}$.\vspace*{-1ex}
\end{itemize}
{\bf No junk} ensures the that constructors can be used to construct all the elements of the target domain. 
{\bf{No confusion, different constructors}} ensures that constructors yield a unique way to construct each element of the target domain. 
{\bf{No confusion, same constructors}} says that constructors are injective.

The {\bf{no confusion, different constructors}} axiom is sufficient to prove Lemma~\ref{lem:sim} for the {\bf Symbol clash} case. Unfortunately, none of these axioms is enough to prove the lemma for the {\bf Occurs check} case. The main issue is that ${x\!= \!f(t_1,.., t_n)}$ cannot be proved equal to $\bot$ when $x \!\in\! \var(f(t_1,.., t_n))$. Recall that the condition $x\! \in \!\var(f(t_1,.., t_n))$ implies that $x$ occurs at least in a~term~$t_i$. 

The axioms for constructors cannot prevent to have $M\models x = f(t_1,.., t_n)$ for some $(S,\Sigma, F)$-model $M$, when $x \in \var(f(t_1,.., t_n))$. Here is a counterexample. 

%
Let $s$ be a sort and and $\Sigma$ a signature which includes only a functional symbol $f\in \Sigma_{s,s}$.
Also, let $M$ be a ML model where $f_M(a) = a$, with $a$ the only element in $M$. 
Note that any valuation $\rho: \Var \to M$, assigns to variables a set equal to $\{a\}$.
Also, note that $f$ satisfies the axioms above: first, the {\bf{no confusion, different constructors}} is satisfied trivially since there is no other symbol in $\Sigma$; second, the {\bf{no confusion, same constructors}} holds, since $\overline{\rho}(f(x) \land f(y)) = \overline{\rho}(f(x)) \cap \overline{\rho}(f(y)) = \{ a \} \cap \{ a \} = \{ a \} = \{ f_M(a) \} \subseteq \overline{\rho}(f(x \land y))$; finally, the {\bf no junk} axiom $\exists x . f(x)$ holds, since $\overline{\rho}(\exists x . f(x)) = \bigcup_{a \in M}\overline{\rho[a/x]}(f(x)) = \bigcup_{a \in M}f_M(a) = M$. However, $x$ and $f(x)$ are unifiable in the sense of ML. 

In our opinion, there are two choices to handle such situations. First, we can modify the syntactic unification algorithm such that it reports also the mappings $x\mapsto t(x)$ when {\bf Occurs check} is applicable (here, $t(x)$ denotes a term that has $x$ as subterm).  If we want to consider only models where the equalities $x=t(x)$ do not hold, then we simply add the axioms $\lnot (x = t(x))$ to $F$. The problem here is that we do not know a priori these axiom patterns. Second, if we want to consider models where  the equalities $x=t(x)$ may hold, then we define $\phi^{t_1=t_2}$ as being $(\bigwedge x=t(x))\implies \phi^\sigma$, where $(\bigwedge x=t(x))$ is the conjunction over all mappings introduced by {\bf Occurs check}, and $\sigma$ is the substitution defined by the {\bf Elimination} mappings. The price paid in this case is that we may get formulas that SMT solvers might not be able to handle.



\section{Generating proofs}
\label{sec:proofs}
In this section we present a sound strategy to generate formal proofs of equivalence between $t_1 \land t_2$ and $t_1 \land \phi^\sigma$ with $\sigma$ the most general unifier of $t_1$ and $t_2$. This strategy uses the rules of the ML proof system~\cite{rosu-2017-lmcs} and some derived rules that mimic the steps of the unification algorithm.

We first explain the main idea of our strategy using Example~\ref{ex:mgu}. 
The equations (\ref{a:start}-\ref{a:elim2}) correspond to the steps of the unification algorithm shown in Figure~\ref{fig:unif}: {\bf Decomposition} for equations (\ref{a:dec1},\ref{a:dec2},\ref{a:dec3}), {\bf Orient} for (\ref{a:orient}), and {\bf Elimination} for (\ref{a:elim1},\ref{a:elim2}). The only exception is the equation (\ref{a:fpatt}), which is justified by Proposition~\ref{prop:fpatt}.
This particular example suggests that successive transformations over the initial pattern produce a conjunction of a term pattern and a constraint. 
The fact that in ML we can express the mgu of term patterns $t_1$ and $t_2$ as a ML pattern $t_1\land t_2$ (i.e., at the \emph{object level}) is important: 
 using the transformations above we can actually generate a \emph{proof certificate} that the obtained constrained term pattern $t_1 \land \phi^\sigma$ is equal  to $t_1 \land t_2$. 

Our current approach is to generate proofs in two stages: first, we start with $t_1 \land t_2$ and we derive $t_1 \land \phi^\sigma$ using several derived proof rules which mimic the steps of the unification algorithm; these will be we proved separately using the ML proof system
; 
second, we start with $t_1 \land \phi^\sigma$ and we derive $t_1 \land t_2$ using the original proof system of ML. For both stages we have strategies that always produce proofs when the most general unifier exists.

\subsubsection{Stage 1}The list of derived rules that we use in the first stage is shown below. For each rule we indicate the  corresponding rule from the unification algorithm:
\newcounter{derived}
\newcommand{\derivedrule}[1]{\refstepcounter{derived}\label{#1}$\Delta$\arabic{derived}.}
\renewcommand{\thederived}{$\Delta$\arabic{derived}}

\begin{center}
\small
\begin{tabular}{l l l}
\derivedrule{delete}& $F \vdash \varphi \land (t = t) \rightarrow \varphi$ & ~~{\small \bf \!\!Delete} \\
\derivedrule{decomposition}& $F \vdash \varphi \land (f(t_1,.., t_n) = f(t'_1,.., t'_n)) \rightarrow \varphi \land t_1 = t'_1 \land..\land t_n = t'_n$ & ~~{\small \bf \!\!Decomposition}\\
\derivedrule{orient}& $F \vdash \varphi \land (f(t_1,.., t_n) = x) \rightarrow  \varphi \land (x = f(t_1,.., t_n))$ & ~~{\small \bf \!\!Orient} \\
\derivedrule{elimination} & $F \vdash \varphi \land (x = t) \rightarrow \varphi[t/x] \land (x = t)$, if $x\not\in \var(t), x \in \var(\varphi)$ & ~~{\small \bf \!\!Elimination}
\end{tabular}
\end{center}
\noindent
These rules are proved (semantically) in the proof of Lemma~\ref{lem:sim}, but we also prove them using the ML proof system (Table~\ref{table:derivedproofs}). Note that there are no corresponding rules for {\bf Occurs check} and {\bf Symbol clash}, because we are interested in generating proofs only for the cases when the most general unifier exists.
An example of a proof that uses the derived rules is shown below:
\newcounter{cellcntr}
\newcommand{\proofline}[1]{\refstepcounter{cellcntr}\label{#1}\roman{cellcntr}~}
\renewcommand{\thecellcntr}{\roman{cellcntr}}
\newcommand{\assumption}{\emph{hypothesis}}
\newcommand{\premise}{\emph{premise}}
\begin{center}
\small
\begin{tabular}{llcl}
\proofline{passumption11}  & $ f(x, g(1), g(z)) \land f(g(y), g(y), g(g(x))$ && \assumption\\
\proofline{pprop.524} & $f(x, g(1), g(z)) \land (f(x, g(1), g(z)) = f(g(y), g(y), g(g(x))))$ && \emph{Prop~\ref{prop:fpatt}}: \ref{passumption11} \\
\proofline{pdec1} & $f(x, g(1), g(z)) \land (x \!=\! g(y)) \land (g(1) = g(y)) \land (g(z) \!=\! g(g(x)))$ && \ref{decomposition}: \ref{pprop.524} \\
\proofline{pdec2} & $f(x, g(1), g(z)) \land (x = g(y)) \land (1 = g(y)) \land (g(z) = g(g(x)))$ && \ref{decomposition}: \ref{pdec1} \\
\proofline{pdec3} & $f(x, g(1), g(z)) \land (x = g(y)) \land (1 = y) \land (z = g(x))$ && \ref{decomposition}: \ref{pdec2} \\
\proofline{pornt} & $f(x, g(1), g(z)) \land (x = g(y)) \land (y = 1) \land (z = g(x))$ && \ref{orient}: \ref{pdec3} \\
\proofline{pelim1} & $f(x, g(1), g(z)) \land (x = g(1)) \land (y = 1) \land (z = g(x))$ && \ref{elimination}:  \ref{pornt} \\
\proofline{pelim2} & $f(x, g(1), g(z)) \land (x = g(1)) \land (y = 1) \land (z = g(g(1)))$ && \ref{elimination}:\ref{pelim1} \\
\end{tabular}
\end{center}



\noindent
Each line represents a proof step annotated with a justification specified as $\langle$the applied proof rule$\rangle$:$\langle$references to previous steps$\rangle$. We intentionally omit $F \vdash$ before each proof step and we prefer to add some useful annotations at the end. 

The first line is our hypothesis. 
The pattern derived at the second line is obtained by applying Proposition~\ref{prop:fpatt} to pattern \ref{passumption11}.
Then, the strategy is given by the unification algorithm.
The third line is obtained by applying \ref{decomposition} to \ref{pprop.524}, that is, {\bf Decomposition} for symbol $f$. 
To keep the above proof simple, we silently use the associativity and commutativity of $\land$.
Starting with the formula at step \ref{passumption11} we are able to derive the formula from step \ref{pelim2}. 

\begin{table}
\label{table:derivedproofs}
\begin{center}
\begin{tabular}{|llcl|}
\hline
\setcounter{cellcntr}{0}
\noindent
\ref{delete} &(\emph{Delete}):&&\\
\proofline{del}  & $\varphi \land t=t$ && \assumption\\
\proofline{dellhs} & $\varphi$ && \ref{propositional}: \ref{dec}\\
\hline

\setcounter{cellcntr}{0}
\noindent
\ref{decomposition} & (\emph{Decomposition}):&&\\
\proofline{dec}  & $\varphi \land (f(t_1,.., t_n) = f(t'_1,.., t'_n))$ && \assumption\\
\proofline{declhs} & $\varphi$ && \ref{propositional}: \ref{dec}\\
\proofline{decrhs} & $f(t_1,.., t_n) = f(t'_1,.., t'_n)$ && \ref{propositional}: \ref{dec}\\
\proofline{decinj} & $f(t_1,.., t_n) = f(t'_1,.., t'_n) \rightarrow t_1 = t'_1 \land..\land t_n = t'_n$ && \emph{inj axiom} in F\\
\proofline{decmp} &$t_1 = t'_1 \land..\land t_n = t'_n$ && \ref{modusponens}: \ref{decrhs}, \ref{decinj}\\
\proofline{decfin} &$\varphi \land t_1 = t'_1 \land..\land t_n = t'_n$&& \ref{propositional}: \ref{declhs}, \ref{decmp}\\

\hline
\setcounter{cellcntr}{0}
\noindent
\ref{orient} &(\emph{Orient}):&&\\
\proofline{ori}  & $\varphi \land (f(t_1,.., t_n) = x)$ && \assumption\\
\proofline{orilhs} & $\varphi$ && \ref{propositional}: \ref{dec}\\
\proofline{orirhs} & $f(t_1,.., t_n) = x$ && \ref{propositional}: \ref{dec}\\
\proofline{orisym} & $x = f(t_1,.., t_n)$ && (symmetry of =): \ref{orirhs}\\
\proofline{orifinal}  & $\varphi \land (x = f(t_1,.., t_n) )$ && \ref{propositional}: \ref{orilhs}, \ref{orisym}\\
\hline

\setcounter{cellcntr}{0}
\noindent
\ref{elimination} &(\emph{Elimination}):&&\\
\proofline{elim}  & $\varphi \land (x=t)$ && \assumption\\
\proofline{elimlhs} & $\varphi$ && \ref{propositional}: \ref{elim}\\
\proofline{elimrhs} & $x = t$ && \ref{propositional}: \ref{elim}\\
\proofline{elimintro} &  $\varphi[x/x]$ && \ref{elimlhs}: ($\varphi = \varphi[x/x]$)\\
\proofline{elimint} & $x = t \land \varphi[x/x]$ && \ref{propositional}: \ref{elimrhs}, \ref{elimintro}\\
\proofline{elimsub} & $(x = t) \land \varphi[x/x] \rightarrow \varphi[t/x] $ && \ref{equalityelimination}\\

\proofline{elimsub} & $\varphi[t/x]$ && \ref{modusponens}: \ref{elimint}, \ref{elimsub}\\
\proofline{elimfinal} & $\varphi[t/x] \land x \!=\! t$ && \ref{propositional}: \ref{elimsub}, \ref{elimrhs}\\
\hline
\setcounter{cellcntr}{0}
\noindent
Prop\,\ref{prop:fpatt}\label{p3to}&($\leftarrow$):&&\\
\proofline{p2}  & $\varphi \land (\varphi = \varphi')$ && \assumption\\
\proofline{p21} & $\varphi = \varphi'$ && \ref{propositional}: \ref{p2} \\
\proofline{p121} & $\varphi$ && \ref{propositional}: \ref{p2} \\
\proofline{p24} & $\varphi'$&& \ref{propositional}: \ref{p21}, \ref{p121}, Prop. \ref{prop:eq}\\
\proofline{p25}  & $\varphi \land \varphi'$ && \ref{propositional}: \ref{p121}, \ref{p24}\\
\proofline{p26}  & $(\varphi \land \varphi') \rightarrow (\varphi \land (\varphi = \varphi')\rightarrow(\varphi \land \varphi'))$ && \ref{propositional}: \ref{l1}\\
\proofline{p27}  & $\varphi \land (\varphi = \varphi')\rightarrow (\varphi \land \varphi')$ && \ref{modusponens}: \ref{p25}, \ref{p26}\\
\proofline{p28}  & $ (\varphi \land \varphi')$ && \ref{modusponens}: \ref{p2}, \ref{p27}\\
\hline
\setcounter{cellcntr}{0}
\noindent
Prop\,\ref{prop:fpatt}\label{p3from} &($\rightarrow$):&&\\
\proofline{p34}  & $ (\varphi \land \varphi') $ && \assumption\\
\proofline{p35}  & $ \dfness{\varphi \land \varphi'}{}{}$ && \emph{definedness axiom in $F$} \\
\proofline{p361}  & $  \dfness{\varphi \land \varphi'}{}{} \rightarrow ((\varphi \land \varphi') \rightarrow \dfness{\varphi \land \varphi'}{}{})$ && \ref{propositional} (\ref{l1})\\
\proofline{p36}  & $ (\varphi \land \varphi') \rightarrow \dfness{\varphi \land \varphi'}{}{}$ && \ref{modusponens}: \ref{p35}, \ref{p361}\\
\proofline{p37}  & $ (\varphi \land \varphi') \rightarrow \varphi \in \varphi'$ && \emph{definition of $\in$}: \ref{p36}\\
\proofline{p38}  & $ (\varphi \land \varphi') \rightarrow (\varphi = \varphi')$ && \ref{membershipequality}: \ref{p37}\\
\proofline{p40}  & $ (\varphi \land \varphi') \rightarrow \varphi$ && \ref{propositional}$^+$\\
\proofline{p39}  & $ (\varphi \land \varphi') \rightarrow \varphi \land (\varphi = \varphi')$ && \ref{propositional}$^+$: \ref{p38}, \ref{p40}\\
\proofline{p50}  & $ \varphi \land (\varphi = \varphi')$ && \ref{modusponens}: \ref{p34}, \ref{p39}\\
\hline
\end{tabular}
\end{center}
\caption{Proofs of the derived rules \ref{delete}-\ref{elimination} and Proposition~\ref{prop:fpatt}.}
\end{table}

It is easy to see that the strategy of the first stage is dictated by the unification algorithm shown in Figure~\ref{fig:unif}. Its soundness is given by Proposition~\ref{prop:fpatt} and Lemma~\ref{lem:sim}.
However, we provide proofs that use the rules of the ML proof system for \ref{delete}-\ref{elimination} and Proposition~\ref{prop:fpatt} in Table~\ref{table:derivedproofs} .
It is worth noting that we used only a few rules of the ML proof system: \ref{propositional}, \ref{modusponens}, \ref{equalityelimination}, \ref{membershipequality}.

To validate the proofs from Table~\ref{table:derivedproofs}, we have encoded the definitions, the proof rules and the needed axioms in Coq. Then we checked our proofs mechanically. 
We have formulated and proved a deduction theorem which holds for the ML proof system fragment that we use (\ref{propositional}, \ref{modusponens}, \ref{equalityelimination}, \ref{membershipequality}). Also, we provide Coq proofs for rather trivial steps (i.e., $\land$ elimination, $\land$ introduction, and other simple propositional lemmas) using the rules in Figure~\ref{fig:prop}. The (assertive) proof style that we used is intended to improve source code reading for non-expert Coq users. The Coq code can be found at~\cite{coqfile}. 

\subsubsection{Stage 2} We start explaining our strategy for stage 2 by proving the reversed implication of our example:

\begin{center}
\small
\begin{tabular}{llcl}
\proofline{assumption1}  & $f(x, g(1), g(z)) \land (x = g(1)) \land (y = 1) \land (z = g(g(1)))$ && \assumption\\
\proofline{lhs0} & $ f(x, g(1), g(z)) $ && \ref{propositional}: \ref{assumption1}\\
\proofline{rhs1}  & $(x = g(1)) \land (y = 1) \land (z = g(g(1)))$ && \ref{propositional}: \ref{assumption1}\\
\proofline{lhs1}  & $x = g(1)$  &&  \ref{propositional}: \ref{rhs1}\\
\proofline{rhs2} & $(y = 1) \land z = g(g(1))$ &&  \ref{propositional}: \ref{rhs1}\\
\proofline{lhs2}  & $y = 1$ &&  \ref{propositional}: \ref{rhs2}\\
\proofline{rhs3}  & $z = g(g(1))$ &&  \ref{propositional}: \ref{rhs2}\\
\proofline{refl1} & $f(x, g(1), g(z)) = f(x, g(1), g(z))$ && \ref{equalityintroduction}\\
\proofline{refl2} & $f(g(y), g(y), g(g(x)) = f(g(y), g(y), g(g(x))$ && \ref{equalityintroduction}\\
\proofline{elim1} & $ f(g(1), g(1), g(z)) = f(x, g(1), g(z))$ && \ref{equalityelimination}: \ref{refl1},\ref{lhs1}\\
\proofline{elim2} & $ f(g(1), g(1), g(g(g(1)))) = f(x, g(1), g(z))$ && \ref{equalityelimination}: \ref{elim1}, \ref{rhs3}\\
\proofline{elim3} & $f(g(1), g(y), g(g(x)) = f(g(y), g(y), g(g(x))$ && \ref{equalityelimination}: \ref{refl2},\ref{lhs2}\\
\proofline{elim4} & $f(g(1), g(1), g(g(x)) = f(g(y), g(y), g(g(x))$ && \ref{equalityelimination}: \ref{elim3}, \ref{lhs2}\\
\proofline{elim5} & $f(g(1), g(1), g(g(g(1))) = f(g(y), g(y), g(g(x))$ && \ref{equalityelimination}: \ref{elim4},\ref{lhs1}\\
\proofline{elim6} & $ f(x, g(1), g(z)) = f(g(y), g(y), g(g(x))$ && \ref{equalityelimination}: \ref{elim2}, \ref{elim5}\\
\proofline{prop} & $  f(x, g(1), g(z)) \!\land\! (f(x, g(1), g(z)) = f(g(y), g(y), g(g(x)))$ && \ref{propositional}: \ref{lhs0}, \ref{elim6}\\
\proofline{final} & $  f(x, g(1), g(z)) \land f(g(y), g(y), g(g(x))$ && Prop~\ref{prop:fpatt}: \ref{prop}\\
\end{tabular}
\end{center}

\noindent
Now, we present the strategy corresponding to this stage, which has five steps:

\renewcommand{\theenumi}{\arabic{enumi}}
\begin{enumerate}
\item start with $t_1 \land \phi^\sigma$ as {\assumption};
\item use \ref{propositional} to break the large conjunction from the {\assumption} (e.g., steps \ref{lhs0}-\ref{rhs3})
\item use  \ref{equalityintroduction} to introduce equalities $t_1 = t_1$ and $t_2 = t_2$ (e.g., steps \ref{refl1}, \ref{refl2});
\item use $\ref{equalityelimination}$ to replace the variables occurring in the left hand sides of the equalities (e.g., \ref{elim2}, \ref{elim5});
\item use $\ref{equalityelimination}$ to equate the right hand sides of the equalities produced by the previous step(e.g. \ref{elim6}); then apply $\ref{propositional}$ ($\land$ introduction, e.g., \ref{prop}), and finally Proposition~\ref{prop:fpatt} (e.g., \ref{final}).
\end{enumerate}

This strategy essentially rebuilds the semantic unifier $t_1 \land t_2$ starting with $t_1 \land \phi^\sigma$. Because $\phi^\sigma$ has the form $\bigwedge_{i=1}^n x_i = u_i$ the step 2 will always produce equalities of the form $x_i = u_i$ for all $i=\overline{1,n}$. In the left hand sides of the equalities introduced by step 3 we can always substitute $x_i$ by $u_i$. Since $\sigma$ is the most general unifier, the left hand sides will become equal after substitutions performed by step 4. Finally, we can always apply \ref{equalityelimination}, \ref{propositional}, and Proposition~\ref{prop:fpatt} conveniently  to obtain $t_1 \land t_2$.
Because it uses only rules from the original proof system of ML (check Table~\ref{table:derivedproofs} for proof of Proposition~\ref{prop:fpatt}), this strategy is sound. 

\renewcommand{\theenumi}{\arabic{enumi}}



\section{Conclusions}

Previous verification efforts with ML~\cite{rosu-stefanescu-2012-fm,DBLP:conf/lics/RosuSCM13,DBLP:conf/icse/RosuS11,DBLP:conf/wrla/RusuA16,arusoaie:hal-01627517,DBLP:conf/birthday/LucanuRAN15,stefanescu-park-yuwen-li-rosu-2016-oopsla,moore-pena-rosu-2018-esop,park-zhang-saxena-daian-rosu-2018-fse,stefanescu-ciobaca-mereuta-moore-serbanuta-rosu-2014-rta} were based on unification. However, unification was always considered a trusted component. 

In this paper we finally tackle down this issue by proposing a sound method for unification which involves a syntactic unification algorithm. 
More precisely, we show that the syntactic unification algorithm proposed by Martelli and Montanari~\cite{martelli} is \emph{sound} for semantic unification in ML. We explain by means of a counterexample, why this algorithm is not \emph{complete} for semantic unification.
Finally, we show a \emph{provableness} property of the same algorithm: we provide a sound strategy to generate a proof certificate when the most general unifier exists. This proof uses some derived rules (which we encode and prove in Coq) and the rules of the ML proof system~\cite{rosu-2017-lmcs}.

\paragraph{Related work.}We include here only the comparison with the closest related work Kore~\cite{koredocs}: an implementation of ML which is currently under development~\cite{kore}. 
They handle conjunctions via a set of  transformations over patterns intended to serve a more general purpose, for instance, to deal with partiality and injections (subsort relations). 
The approach that we proposed here focuses on how the syntactic unification algorithms can be used to help reasoning in ML.\vspace*{-1ex}

\paragraph{Future work.}
The fact that the proof of the soundness of our approach depends on the unification algorithm is intriguing. We intend to explore whether there is an independent proof, which uses only the definition of the most general unifier.

A topic that also needs further investigation is the completeness of the algorithm with respect to semantic unification.  In Section~\ref{sec:completeness} we discuss the completeness issue and we sketch two solutions, but a deeper investigation is required.

Via private communication with the Kore team we learned that a slightly modified proof system is implemented in Kore for which a deduction theorem can be proved. We intend to adapt our proof generation strategy to use this new proof system since it seems that the deduction theorem can simplify some steps.

Finally, a completely new ground to explore is unification modulo axioms (e.g., commutativity, associativity, and so on). Obviously, it is more challenging to use the existing unification modulo axioms algorithms in the same manner as we have done for syntactic unification.

\subsubsection{Acknowledgements.}
We would like to especially thank the Kore developers and researchers: Phillip Harris, Traian {\c S}erb{\u a}nu{\c t}{\u a} and Virgil {\c S}erb{\u a}nu{\c t}{\u a} for their valuable assistance and feedback. They helped us with our proof generation strategy and they suggested improvements for our current work.
We also want to specially thank Grigore Ro{\c s}u for the fruitful discussions that we had about this topic at FROM 2018.
This work was supported by a grant of the ``Alexandru Ioan Cuza'' University of Ia{\c s}i, within the Research Grants program, Grant UAIC, code GI-UAIC-2017-08.
%
\bibliographystyle{splncs04}
\bibliography{refs}

\end{document}